\newtheorem{lemma}{Lemma}
\definecolor{humanblue}{RGB}{0,90,160}
\definecolor{aigreen}{RGB}{0,120,0}
\newtcolorbox{humanprompt}{
  colback=humanblue!5,
  colframe=humanblue,
  title=Human Prompt,
  fonttitle=\bfseries,
  sharp corners,
  boxrule=1pt,
  breakable
}
\newtcolorbox{aireply}{
  colback=aigreen!5,
  colframe=aigreen,
  title=AI Reply,
  fonttitle=\bfseries,
  sharp corners,
  boxrule=1pt,
  breakable
}
\titleformat{\section}{\large\bfseries}{\thesection}{1em}{}
\titleformat{\subsection}{\normalsize\bfseries}{\thesubsection}{1em}{}
\title{Enhancing Formal Software Specification with Artificial
Intelligence}
\author{
Antonio Abu Nassar,
Eitan Farchi  \\
\textit{IBM Research}\\
antonio.abu.nassar@ibm.com,
farchi@il.ibm.com
}
\date{January 2026}
\begin{document}
\maketitle

%\begin{abstract}
%\section*{Abstract}

% \begin{center}
% \textbf{Abstract}
% \end{center}
% Although the value of formal software specification,  such as early finding of problems and built in software invariants is generally accepted, formal software specification have never gained wide spread industrial use due to the overhead associated with the creation of formal specification and the skills required to create it.   With the introduction of AI a natural question arises - can we gain the benefits of formal software specification while avoiding or reducing its challenges?  In this case study we find out that the answer is yes.  We use natural language embedded with mathematical notation and written in Latex to create the software specification thus removing a lot of the notation overhead associated with traditional software specification language such as Z.  The case study is a simulation of knowledge growth in the organization.   It exhibits the complexity and satellites needed to serve as a good case study for our purpose. 

\begin{center}
\textbf{Abstract}
\end{center}

Formal software specification is known to enable early error detection
and explicit invariants, yet it has seen limited industrial adoption
due to its high notation overhead and the expertise required to use
traditional formal languages. This paper presents a case study showing
that recent advances in artificial intelligence make it possible to
retain many of the benefits of formal specification while
substantially reducing these costs.
%% BEGINNING OF SUGGESTED ADDITION BY ANTONIO
The necessity of a clear distinction between what is controlled
by the system analyst and can highly benefits from the rigor of formal specification and what need not be controlled is demonstrated.
%% END OF SUGGESTED ADDITION BY ANTONIO
We use natural language augmented
with lightweight mathematical notation and written in \LaTeX\ as an
intermediate specification language, which is reviewed and refined by
AI prior to code generation. Applied to a nontrivial simulation of
organizational knowledge growth, this approach enables early
validation, explicit invariants, and correctness by design, while
significantly reducing development effort and producing a correct
implementation on the first attempt.

%\end{abstract}

\tableofcontents
\newpage

\section{Introduction}

The benefits of formal software specification—such as early detection
of design flaws, explicit assumptions, and built-in invariants—are
widely acknowledged in both academia and industry. Nevertheless,
formal methods have seen limited industrial adoption\cite{Saiedian1996}. The primary
barriers are the significant overhead involved in writing and
maintaining formal specifications, as well as the specialized
expertise required to use traditional specification languages
(e.g., Z, VDM). As a result, many systems continue to rely on informal
natural-language requirements, deferring correctness concerns to
later development and testing stages.

Recent advances in artificial intelligence raise a natural and timely
question: \emph{can we obtain many of the benefits of formal
specification while avoiding or substantially reducing its
traditional costs}? This paper presents a case study demonstrating
that the answer is affirmative. We show that natural language, when
augmented with lightweight mathematical notation and written in
\LaTeX, can serve as an effective intermediate specification
language. When combined with AI-based review, reflection, and code
generation, this approach enables early validation, explicit
invariants, and a strong form of correctness-by-design—without the
burden of heavyweight formal methods.

Our case study focuses on a simulation of knowledge growth within an
organization. This domain was deliberately chosen because it is
sufficiently rich to require nontrivial structure, invariants, and
parameterization, while still being accessible and interpretable. The
simulation therefore serves as a realistic and demanding testbed for
evaluating AI-assisted specification and development.

A central goal of this work is to \emph{shift validation left} as much
as possible in the development lifecycle. We leverage AI throughout
the specification, design, and validation process, using natural
language as the primary interface between the human and the system.
Core requirements and design decisions are refined through AI-driven
reflection and other agentic design patterns, with the aim of
producing a system that is inherently correct by construction rather
than corrected post hoc through debugging.

We examine two distinct methodologies for developing the simulation
from a natural-language specification. In the first approach, the
simulation is corrected iteratively based solely on its execution
results. In the second approach, AI is used to review and refine the
specification \emph{before} any code is generated or executed.
Figure~\ref{fig:simulation-methodology} illustrates these two
workflows. Empirically, the second approach required approximately
$\frac{1}{6}$ of the time needed by the first and produced a correct
simulation on the first code-generation attempt, highlighting the
practical value of early, AI-assisted specification review.

\begin{figure}[htbp]
    \centering
    \includegraphics[width=0.8\textwidth]
        {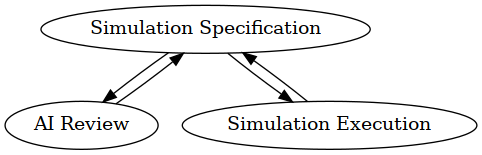}
    \caption{Simulation methodology flow showing iterative connections
    between specification, AI review, and execution.}
    \label{fig:simulation-methodology}
\end{figure}

The specification style used in this work can be viewed as an
intermediate representation between informal natural language and
fully formal specification languages. Natural language is enriched
with mathematical notation, enabling precise assertions that can be
reasoned about, proved at the specification level, and later
translated into runtime invariants.
%In a companion example (see the
%linked document), we demonstrate how purely informal natural language
%can be transformed—using AI—into this mathematically augmented style
%of specification.

An important observation from the study is that, while high-level
invariants were envisioned by the human designer, the AI often
proposed lower-level invariants that were both correct and useful.
This division of labor proved effective: the human focused on
conceptual intent, while the AI contributed detailed consistency
checks. However, we also encountered attention limitations. As the
specification grew beyond a few pages, the AI occasionally omitted
parts of it during generation. This reinforces the importance of
abstraction, modularization, and decomposition when working with
longer specifications.

Beyond faithfully implementing stated requirements, the AI frequently
suggested natural and valuable extensions that had not been explicitly
requested. In a business setting, such suggestions might include
internationalization or configurability. In our case study, one
notable example was the proposal to introduce Monte Carlo simulation
to ensure statistical stability of the results. These contributions
were welcomed, provided they could be validated at the specification
level.

Throughout the development of the simulation, the generated code
itself was not manually reviewed, as the combination of
specification-level reasoning and invariant checking proved
sufficient. The final system supports multiple experimental
configurations, automatically compares alternative strategies, and
produces succinct conclusions about their relative effectiveness.
Interestingly, prompting the AI to explicitly summarize experimental
conclusions required some repetition, suggesting an area for further
refinement of agentic workflows.

%% BEGINNING OF SUGGESTED ADDITION BY ANTONIO
An additional experiment, described in Section~\ref{sec:ambiguous_business_specification},
used summarization of the refined, carefully written specification of the business logic,
to obtain a version of the specification where the line between what the system analyst
wants to control and what the analyst does not want to control became blurred.
The experiment emphasizes the primacy of precision when writing
specifications for business logic, and the importance of a clear distinction between
areas of strict control and areas where control is not desired.
%% END OF SUGGESTED ADDITION BY ANTONIO

All experiments reported in this paper were conducted using AI.
The generated code is written in Python and is available via linked
Google Colab notebooks. The documented sequence of prompts and
responses can serve as a benchmark for comparing the capabilities of
different AI agents in specification-driven software development.

For readability, the reader is encouraged to first follow the
high-level development flow, skimming the prompts and AI responses,
before diving into the details. The table of contents provides a clear
overview of the interaction structure, and visual boxes are used
throughout the text to distinguish between human prompts (blue) and
AI replies (red).

\section{Experiment – Codeless Development}

To gain experience in creating software solely based on its behavior and refining its natural language specification, I defined a game that simulates the impact of a \textit{Tertius Gaudens} (the third who links)—a person in an organization who connects other individuals or groups—on the growth of knowledge within the organization. The basic prompt used to initiate this experiment is available in \ref{basicPrompt}.

\vspace{1em}
I dedicated approximately four hours to this exercise. The interaction resulted in the generation of hundreds of lines of code, which can be viewed here:  
\href{https://colab.research.google.com/drive/1qFuYmyzGHcDNqN-rTAgYEA_L8LuBgEbL?usp=sharing}{\texttt{Notebook}}.

\vspace{0.5em}
Additionally, a 7-page report was produced:  
\href{https://www.overleaf.com/read/rppcmvdhgzvw#89c2ca}{\texttt{Report}}.

\vspace{1em}
Interestingly, I did not inspect the code directly during the development process. Instead, I relied on iterative prompting and reviewing simulation results to guide the development. The report was validated through careful reading and minor adjustments.  I estimate that creating the simulation and conducting the research from scratch would have taken me a week.  

% \subsection*{Key Takeaway}

% \begin{quote}
% \textbf{My main insight:} The AI revolution in software engineering is steering us toward a future where natural language interfaces become the primary method of software development. Reading code may become secondary—or even unnecessary.

% \vspace{0.5em}
% This shift also has profound implications for legacy code development. Specifically, we methodologies and technologies that enable the development and understanding of legacy systems \textbf{without reading the code} will emerge.
% \end{quote}
%```
%=====

\subsection{The base prompt}
\label{basicPrompt}

%TBC - Attempt to create other styles of base prompt that is not mathematic. 
%TBC - Explain that the base prompt produce code that does not execute and is incorrect

\begin{humanprompt}

\textbf{Prompt:} \\
Given the following definition of the simulation, will you be able to write tests for it?

\medskip

We are given a graph \( G \). The vertices are a set of inventors \( v_1, v_2, \ldots, v_n \). An edge \( (v_i, v_j) \) means that \( v_i \) and \( v_j \) communicate. Each inventor is associated with a set of invention items, initialized to the empty set.

\medskip

The game is played in stages. In each stage, another player called \texttt{tg} can decide to add \( k \) edges to the graph, where \( k < n^2 \). For example, consider \( k = 1 \).

\medskip

Next, each clique produces a new knowledge item with probability:
\[
\frac{\text{Number of knowledge items owned by clique members}}{\text{Total number of knowledge items}}
\]
If a new knowledge item is created, it is added to the knowledge set of each clique member.

\medskip

If the clique that created a knowledge item includes an edge added by \texttt{tg}, then \texttt{tg} gains a reward of 1.

\medskip

The game is repeated starting from the initial graph.

\medskip

\textbf{Task:} \\
Create a simulation of the game in Python for five players:
\begin{itemize}
    \item \( v_1 \) connected to \( v_2 \) and \( v_3 \)
    \item \( v_2 \) connected to \( v_3 \)
    \item \( v_3 \) connected to \( v_4 \) and \( v_5 \)
    \item \( v_4 \) connected to \( v_5 \)
\end{itemize}

Include a \texttt{tg} player with \( k = 1 \). Play the game for 10 stages. Represent plots of:
\begin{itemize}
    \item The gain of \texttt{tg}
    \item The knowledge growth of each player
\end{itemize}

Let \texttt{tg} choose an edge to add:
\begin{itemize}
    \item Randomly in one simulation
    \item Always connecting \( v_1 \) and \( v_4 \) in another simulation
\end{itemize}

\end{humanprompt}

\section{Simulation Definition Review}

By iterating with the AI, I addressed inconsistencies it identified during the review process. The iterations below resulted in a specification that is consistent and a simulation that ran correctly on the first attempt. Notably, this process took only 20 minutes. In the implementation shown below, I focused solely on the first stage of the simulation—previously implemented—based on insights from the simulation execution. As a result, the implementation time was reduced to just $\frac{1}{6}$ of the original effort.

\subsection{User Question}

\begin{humanprompt}
\textbf{Prompt:} \\
Please review the simulation definition above (\ref{basicPrompt}) — is there any contradiction in the definition of the simulation?
\end{humanprompt}

\subsection{Review and Analysis}

\begin{aireply}

\textbf{AI reply:} \\

\textbf{Simulation Summary:}
\begin{itemize}
    \item \textbf{Graph Structure:} A graph \( G \) with vertices \( v_1, v_2, \ldots, v_n \) representing inventors. An edge \( (v_i, v_j) \) indicates communication between inventors.
    \item \textbf{Knowledge Sets:} Each inventor starts with an empty set of invention items.
    \item \textbf{Stages:} In each stage:
    \begin{itemize}
        \item A player called \texttt{tg} adds \( k \) edges to the graph, where \( k < n^2 \).
        \item Each clique in the graph may produce a new knowledge item with probability:
        \[
        \frac{\text{Number of knowledge items owned by clique members}}{\text{Total number of knowledge items}}
        \]
        \item If a new item is created, it is added to the knowledge set of each clique member.
        \item If the clique includes an edge added by \texttt{tg}, \texttt{tg} gains a reward of 1.
    \end{itemize}
    \item \textbf{Repetition:} The game is repeated starting from the initial graph.
\end{itemize}

\textbf{Consistency and validity:}
\begin{itemize}
    \item The rules for knowledge generation and reward assignment are logically consistent.
    \item The probability formula is well-defined and bounded between 0 and 1.
    \item The reward mechanism for \texttt{tg} is clearly tied to the influence of added edges.
\end{itemize}

\textbf{Potential Ambiguities:}
\begin{enumerate}
    \item \textbf{Graph Reset Clause:} The phrase ``game is repeated starting from the initial graph'' suggests the graph resets each stage. This contradicts the idea of cumulative edge additions and knowledge growth. It is recommended to clarify that the graph evolves cumulatively.
    \item \textbf{Probability Edge Case:} When the total number of knowledge items is zero, the formula becomes undefined. A default probability of 1.0 is a reasonable resolution.
    \item \textbf{Clique Definition:} It is assumed that cliques are maximal. Clarification is needed on whether overlapping or nested cliques are considered.
    \item \textbf{Reward Multiplicity:} It is not specified whether multiple rewards can be earned from a single clique if multiple \texttt{tg}-added edges are present. A reasonable interpretation is one reward per clique per stage.
\end{enumerate}

\textbf{Conclusion:}
There are no logical contradictions in the simulation definition. However, the graph reset clause introduces ambiguity and should be revised to reflect cumulative evolution of the graph and knowledge sets.

\end{aireply}

\subsection{User clarifications based on the above}

\begin{humanprompt}

\textbf{Prompt:} \\
Here are the following clarification.
\begin{enumerate}
    \item The TG add edges temporarily at each stage of the simulation according to its strategy.   Then, they are removed returning the graph $G$ to its original state. 
    \item At each stage, the simulation chooses a single clique randomly.  Then a new knowledge item is created for the chosen clique in the prescribed probability.
    \item To avoid the edge cases we will have a knowledge item created in the updated probability of 1 if the total number of knowledge items is 0.

\end{enumerate}

Please update the original specification to incorporate the above clarifications. 

\subsection{The updated prompt}

The following updated prompt was produced by the AI. 

\medskip

We are given a graph \( G \). The vertices are a set of inventors \( v_1, v_2, \ldots, v_n \). An edge \( (v_i, v_j) \) means that \( v_i \) and \( v_j \) communicate. Each inventor is associated with a set of invention items, initialized to the empty set.

\medskip

The game is played in stages. In each stage:
\begin{itemize}
    \item A player called \texttt{tg} can decide to add \( k \) edges to the graph, where \( k < n^2 \). For example, consider \( k = 1 \).
    \item The edges added by \texttt{tg} are \textbf{temporary} and are removed at the end of the stage, returning the graph \( G \) to its original state.
    \item A single clique is selected \textbf{randomly} from the graph (with the temporary edges included).
    \item A new knowledge item is created for the selected clique with probability:
    \[
    \frac{\text{Number of knowledge items owned by clique members}}{\text{Total number of knowledge items}}
    \]
    \item If the total number of knowledge items is zero, the probability is defined to be 1 to avoid edge cases.
    \item If a new knowledge item is created, it is added to the knowledge set of each clique member.
    \item If the selected clique includes an edge added by \texttt{tg}, then \texttt{tg} gains a reward of 1.
\end{itemize}

\medskip

\textbf{Task:} \\
Create a simulation of the game in Python for five players:
\begin{itemize}
    \item \( v_1 \) connected to \( v_2 \) and \( v_3 \)
    \item \( v_2 \) connected to \( v_3 \)
    \item \( v_3 \) connected to \( v_4 \) and \( v_5 \)
    \item \( v_4 \) connected to \( v_5 \)
\end{itemize}

Include a \texttt{tg} player with \( k = 1 \). Play the game for 10 stages. Represent plots of:
\begin{itemize}
    \item The gain of \texttt{tg}
    \item The knowledge growth of each player
\end{itemize}

Let \texttt{tg} choose an edge to add:
\begin{itemize}
    \item Randomly in one simulation
    \item Always connecting \( v_1 \) and \( v_4 \) in another simulation
\end{itemize}

\end{humanprompt}

\subsection{User Question}

\begin{humanprompt}

\textbf{Prompt:} \\
Please review the updated simulation definition above — is there any contradiction in the definition of the simulation?
\end{humanprompt}

\subsection{Consistency and Validity of updated simulation definition}

\begin{aireply}

\begin{itemize}
    \item The updated definition resolves the earlier ambiguity regarding whether the graph evolves or resets. It is now clear that the graph structure is static, and \texttt{tg}'s influence is temporary per stage.
    \item The rule of selecting a single clique per stage simplifies the simulation and avoids overlapping reward conditions.
    \item The edge case for zero knowledge items is explicitly handled, ensuring the probability is always well-defined.
    \item The reward mechanism is clearly scoped to the selected clique and the temporary edges, avoiding ambiguity.
\end{itemize}

\subsubsection*{Conclusion}

The updated simulation definition is logically consistent and unambiguous. All previously noted contradictions or unclear elements have been addressed through the clarifications. The simulation is now well-defined and suitable for implementation and testing.

\end{aireply}

\subsection{Creation of the implementation}

\begin{humanprompt}

\textbf{Prompt:} \\
Great.  Now please create a Python program that implements the simulation
\end{humanprompt}

\subsection{Generated code}

Here is the generated code \href{https://colab.research.google.com/drive/1qFuYmyzGHcDNqN-rTAgYEA_L8LuBgEbL#scrollTo=GtcV9_U6GhXg&line=2&uniqifier=1}{link}.  According to the simulation results code was correct at first iteration.  This is in contrast to the first experiment in which the base prompt was was use to product the code and then the simulation results were used to determine how the simulations needs to be corrected. 

\section{Adding Shapley Value in Cooperative Recommendation Games}

% Next I tried to add the another measure of knowledge called the Shapley value in recommendation games to the simulation.  I started with the following basic prompt and also provided the prompt used in the previous stages.

Next, I attempted to incorporate an additional measure of knowledge,
the Shapley value, into the simulation of knowledge growth within the
organization. I began with a basic prompt to guide the AI in this task,
while also providing the prompts used in the previous stages to maintain
context and continuity in the development of the simulation.

\subsection{Basic Shapely value prompt}

\begin{humanprompt}

\textbf{Prompt:} \\
In the context of cooperative games, the "recommendation" game can be formulated as a cooperative game $(N, v)$, where each player $i \in N$ is able to recommend a set of items $B_i$. The entire set of items being considered is denoted by $B = \bigcup_{i \in N} B_i$. The value of a coalition $S \subseteq N$ is defined as:

\[
v(S) = \left| \bigcup_{i \in S} B_i \right|
\]

The Shapley value of a player $i$ in this game is given by:

\[
\phi_i = \sum_{j \in B_i} \frac{1}{|\{k \mid j \in B_k\}|}
\]

This formulation is intuitive: if item $j$ is recommended by player $i$ (i.e., $j \in B_i$), the contribution of $i$ to the recommendation of $j$ is weaker the more other agents also recommend $j$, that is, the larger the set $\{k \mid j \in B_k\}$.

In our context, each player corresponds to a vertex $v_i$ in a graph, and each vertex has an associated set of knowledge items $B_i$. This naturally forms a cooperative game as described above. We generalize the current program to compute the Shapley value of each vertex/player and graphically represent its change during the simulation.

\end{humanprompt}

\subsection{Consistency}

\begin{humanprompt}

\textbf{Prompt:} \\
Please review the following simulation specification and highlight ambiguities and inconsistencies concerning the Shapely value 

\end{humanprompt}

\subsection{Ambiguities and Inconsistencies Concerning the Shapley Value}

\begin{aireply}

This section outlines key ambiguities and inconsistencies in the simulation specification with respect to the application of the Shapley value.

\subsubsection{Mismatch Between Simulation Dynamics and Shapley Value Definition}
The simulation involves multiple stages, temporary edge additions, random clique selection, and probabilistic knowledge creation. In contrast, the Shapley value formulation assumes a static cooperative game with a well-defined value function. It is unclear when the Shapley value is computed (e.g., after each stage or only at the end), and whether it reflects dynamic contributions or static outcomes.

\subsubsection{Unclear Definition of Coalition Value \( v(S) \)}
The cooperative game defines the value of a coalition \( S \subseteq N \) as \( v(S) = \left| \bigcup_{i \in S} B_i \right| \), yet the simulation does not explicitly form or evaluate coalitions. Clique selection is random and temporary, not strategic or player-driven, and there is no mechanism for evaluating the value of arbitrary subsets \( S \subseteq N \) during the simulation.

\subsubsection{Role of \texttt{tg} in the Cooperative Game}
It is ambiguous whether \texttt{tg} is a player in the cooperative game (i.e., included in \( N \)), an external facilitator whose reward is tracked separately, or a contributor to knowledge items. If \texttt{tg} is part of the game, its knowledge set and contribution to \( v(S) \) must be defined.

\subsubsection{Knowledge Item Ownership and Duplication}
The Shapley value formula depends on how many players own each item. It is unclear whether knowledge items are unique per player or shared. If a knowledge item is added to all clique members, duplication affects the denominator in the Shapley formula, potentially inflating or deflating contributions.

\subsubsection{Clique Formation and Coalition Interpretation}
Cliques are selected randomly and temporarily, but Shapley value requires evaluating all possible coalitions. The simulation does not explore all subsets \( S \subseteq N \), nor does it track marginal contributions across permutations.

\subsubsection{Temporal vs. Static Analysis}
The simulation is dynamic (10 stages), while the Shapley value is typically computed for a single snapshot of the game. If the Shapley value is to be tracked over time, it must be recomputed after each stage, the evolution of \( B_i \) must be recorded, and the interpretation of changing Shapley values must be clarified (e.g., does it reflect influence, contribution, or centrality?).

\subsection{Recommendations to resolve ambiguities}
\begin{enumerate}
    \item Clarify the time of Shapley value computation (e.g., per stage or at the end).
    \item Define the role of \texttt{tg} in the cooperative game.
    \item Track knowledge item ownership precisely, ensuring unique identifiers and ownership sets.
    \item Align simulation with coalition evaluation by considering multiple subsets \( S \subseteq N \).
    \item Add a mechanism to track marginal contributions for each player.
    \item Visualize Shapley values per player after each stage to show dynamic contribution.
\end{enumerate}

\end{aireply}

\subsection{Clarifying ambiguity}

\begin{humanprompt}

\textbf{Prompt:} \\
Please consider the following clarifications.
\begin{itemize}
\item 
Calculate the Shapely value after each stage of the simulation.  For each player or vertices, $v_i$, use its knowledge items after the stage as the set $B_{v_i}$ in the definition of the general cooperative game.  
\item 
$tg$ is not part of the cooperative game. The cooperative game's players are only the vertices of the graph $G$.
\item 
When a clique creates a know knowledge item it is added to each of the members of the clique, thus, they share the new knowledge item.  Let $S$ be a clique, assume that $B_{v_i}$ is the list of current knowledge items of $v_i \in S$.  Next assume that a new knowledge item, $j$, was created then we update $B_{v_i} = B_{v_i} \cup j$ for all $v_i \in S$.  thus, after the update each member of $S$ share the knew knowledge item $j$. 
\item 
After a stage in the simulation is completed then we consider the set of vertices and their associated knowledge items as a recommendation cooperative game and use the formula

\[
\phi_i = \sum_{j \in B_i} \frac{1}{|\{k \mid j \in B_k\}|}
\]

to calculate the Shapely value of each vertex.

\item 

\end{itemize}

Please create an updated prompt of the entire simulation that incorporates the above clarifications. 

\end{humanprompt}

\subsection{Corrected Simulation Definition with Shapley Value Integration}

\begin{aireply}

We are given a graph \( G \) whose vertices represent a set of inventors \( v_1, v_2, \ldots, v_n \). An edge \( (v_i, v_j) \) indicates that \( v_i \) and \( v_j \) can communicate. Each inventor \( v_i \) is associated with a set of knowledge items \( B_i \), initialized to the empty set.

\medskip

The game is played in discrete stages. In each stage:

\begin{itemize}
    \item A facilitator player, denoted \texttt{tg}, may add up to \( k \) temporary edges to the graph, where \( k < n^2 \). These edges exist only for the duration of the stage and are removed afterward.
    \item A single clique is selected randomly from the graph, including the temporary edges.
    \item A new knowledge item is created for the selected clique with probability:
    \[
    P = \frac{\left| \bigcup_{i \in C} B_i \right|}{\left| \bigcup_{i \in N} B_i \right|}
    \]
    where \( C \) is the set of clique members and \( N \) is the set of all players. If the denominator is zero, the probability is defined to be 1.
    \item If a new knowledge item is created, it is added to the knowledge set \( B_i \) of each clique member \( i \in C \).
    \item If the selected clique contains any edge added by \texttt{tg}, then \texttt{tg} receives a reward of 1.
\end{itemize}

\medskip

To evaluate contributions using the Shapley value, we define the cooperative game \( (N, v) \), where:
\[
v(S) = \left| \bigcup_{i \in S} B_i \right|
\]
for any coalition \( S \subseteq N \). The Shapley value of player \( i \in N \) is computed as:
\[
\phi_i = \sum_{j \in B_i} \frac{1}{|\{k \in N \mid j \in B_k\}|}
\]
This reflects the fractional contribution of each player to the collective knowledge pool, accounting for shared ownership.

\medskip

\textbf{Simulation Task:}
\begin{itemize}
    \item Initialize a graph with five players:
    \begin{itemize}
        \item \( v_1 \) connected to \( v_2 \) and \( v_3 \)
        \item \( v_2 \) connected to \( v_3 \)
        \item \( v_3 \) connected to \( v_4 \) and \( v_5 \)
        \item \( v_4 \) connected to \( v_5 \)
    \end{itemize}
    \item Include a \texttt{tg} player with \( k = 1 \)
    \item Run the simulation for 10 stages
    \item In one simulation, let \texttt{tg} choose an edge randomly
    \item In another simulation, let \texttt{tg} always add the edge \( (v_1, v_4) \)
    \item After each stage, compute and plot:
    \begin{itemize}
        \item The cumulative reward of \texttt{tg}
        \item The knowledge growth of each player
        \item The Shapley value \( \phi_i \) of each player
    \end{itemize}
\end{itemize}

\subsection{Updated simulation}

The simulation generated using the updated prompt can be seen here \href{https://colab.research.google.com/drive/1qFuYmyzGHcDNqN-rTAgYEA_L8LuBgEbL#scrollTo=eTli_ySY_Vdv&line=1&uniqifier=1}{link}.  Did not have to correct the specification of the simulation.

\section{Putting it together - Shapely value and the two simulation tasks}

Here I attempted to put together the Shapely value specification obtained above and the two simulation tasks.  

%=====
\subsection{Corrected simulation specification with Shapley Value and the two simulation tasks}

Next I attempt a complete specification that includes the Shapely value and the two simulations. 

We are given a graph \( G \) whose vertices represent a set of inventors \( v_1, v_2, \ldots, v_n \). An edge \( (v_i, v_j) \) indicates that \( v_i \) and \( v_j \) can communicate. Each inventor \( v_i \) is associated with a set of knowledge items \( B_i \), initialized to the empty set.

\medskip

The game is played in discrete stages. In each stage:

\begin{itemize}
    \item A facilitator player, denoted \texttt{tg}, may add up to \( k \) temporary edges to the graph, where \( k < n^2 \). These edges exist only for the duration of the stage and are removed afterward.
    \item A single clique is selected randomly from the graph, including the temporary edges.
    \item A new knowledge item is created for the selected clique with probability:
    \[
    P = \frac{\left| \bigcup_{i \in C} B_i \right|}{\left| \bigcup_{i \in N} B_i \right|}
    \]
    where \( C \) is the set of clique members and \( N \) is the set of all players. If the denominator is zero, the probability is defined to be 1.
    \item If a new knowledge item is created, it is added to the knowledge set \( B_i \) of each clique member \( i \in C \).
    \item If the selected clique contains any edge added by \texttt{tg}, then \texttt{tg} receives a reward of 1.
\end{itemize}

\medskip

To evaluate contributions using the Shapley value, we define the cooperative game \( (N, v) \), where:
\[
v(S) = \left| \bigcup_{i \in S} B_i \right|
\]
for any coalition \( S \subseteq N \). The Shapley value of player \( i \in N \) is computed as:
\[
\phi_i = \sum_{j \in B_i} \frac{1}{|\{k \in N \mid j \in B_k\}|}
\]
This reflects the fractional contribution of each player to the collective knowledge pool, accounting for shared ownership.

\medskip

\textbf{Simulation Task one:}
\begin{itemize}
    \item Initialize a graph with five players:
    \begin{itemize}
        \item \( v_1 \) connected to \( v_2 \) and \( v_3 \)
        \item \( v_2 \) connected to \( v_3 \)
        \item \( v_3 \) connected to \( v_4 \) and \( v_5 \)
        \item \( v_4 \) connected to \( v_5 \)
    \end{itemize}
    \item Include a \texttt{tg} player with \( k = 1 \)
    \item Run the simulation for 10 stages
    \item In one simulation, let \texttt{tg} choose an edge randomly
    \item In another simulation, let \texttt{tg} always add the edge \( (v_1, v_4) \)
    \item Compute and plot:
    \begin{itemize}
        \item The cumulative reward of \texttt{tg} after each stage
        \item The knowledge growth of each player after each stage
        \item The Shapley value \( \phi_i \) of each player after each stage
    \end{itemize}
\end{itemize}

\textbf{Simulation Task two:}
\begin{itemize}
    \item Initialize a graph with ten players:
    \begin{itemize}
        \item \( v_1 \), \( v_2 \) to \( v_5 \) connected to each other.
        \item \( v_5 \), \( v_6 \) to  \( v_10 \) connected to each other. 
    \end{itemize}
    \item Include a \texttt{tg} player with \( k = 4 \)
    \item Run the simulation for 10 stages
    \item In one simulation, let \texttt{tg} choose to connect 5 pairs of vertices that are not connected.  For example, \texttt{tg} may choose to connect \( v_1 \) with \( v_{10} \).
    \item In another simulation, let \texttt{tg} always connect \( v_4 \) with \( v_6 \) to. \( v_{10} \).
    \item Compute and plot:
    \begin{itemize}
        \item The cumulative reward of \texttt{tg} after each stage.
        \item The knowledge growth of each player after each stage.
        \item The Shapley value \( \phi_i \) of each player after each stage. 
    \end{itemize}
\end{itemize}

%=====

\end{aireply}

%TBC CONTINUE FLOW REVIEW FROM HERE 

% \subsection{Generated the code}

% Resulted in 362 lines of code, including documentation. See code here \href{https://colab.research.google.com/drive/1qFuYmyzGHcDNqN-rTAgYEA_L8LuBgEbL#scrollTo=Y68iPB6wvlf0&line=362&uniqifier=1}{link}.  It now both scenarios with the big and the small two connected cliques and the Shapely program correct on first generation from the above specification. 

\subsection{Generated Code}

The AI-generated code resulted in 362 lines, including documentation.
It can be accessed here: \href{https://colab.research.google.com/drive/1qFuYmyzGHcDNqN-rTAgYEA_L8LuBgEbL#scrollTo=Y68iPB6wvlf0&line=362&uniqifier=1}{Colab link}.
The implementation now correctly handles both scenarios—large and
small two-connected cliques—and incorporates the Shapley value
calculation, producing a correct simulation on the first generation
from the specification described above.

\section{Leaving the organization}

Next I attempt to add to the simulation the possibility that a person leaves the organization.

%================LEAVING THE ORGANIZATION
\subsection{Leaving the organization specification}

\begin{humanprompt}

\textbf{Prompt:} \\
We are given a graph \( G \) whose vertices represent a set of inventors \( v_1, v_2, \ldots, v_n \). An edge \( (v_i, v_j) \) indicates that \( v_i \) and \( v_j \) can communicate. Each inventor \( v_i \) is associated with a set of knowledge items \( B_i \), initialized to the empty set.

\medskip

The game is played in discrete stages. In each stage:

\begin{itemize}
    \item A facilitator player, denoted \texttt{tg}, may add up to \( k \) temporary edges to the graph, where \( k < n^2 \). These edges exist only for the duration of the stage and are removed afterward.
    \item A single clique is selected randomly from the graph, including the temporary edges.
    \item A new knowledge item is created for the selected clique with probability:
    \[
    P = \frac{\left| \bigcup_{i \in C} B_i \right|}{\left| \bigcup_{i \in N} B_i \right|}
    \]
    where \( C \) is the set of clique members and \( N \) is the set of all players. If the denominator is zero, the probability is defined to be 1.
    \item If a new knowledge item is created, it is added to the knowledge set \( B_i \) of each clique member \( i \in C \).
    \item If the selected clique contains any edge added by \texttt{tg}, then \texttt{tg} receives a reward of 1.
    \item The vertex, \(v_i\), with the highest knowledge, i.e., with \( |B_i|\) being maximum, leaves the organization in probability 

    \[
    P = \frac{\left|  B_i \right|}{\left| \bigcup_{i \in N} B_i \right|}
    \]

    If \( \left| \bigcup_{i \in N} B_i \right| = 0 \) then \(p = 0\).  If it is decided that the vertex, \(v_i\), with the highest knowledge leaves the organization then it is dropped from from the graph $G$ and the edges connected to \(v_i\) are dropped from the graph as well.  
\end{itemize}

\medskip

To evaluate contributions using the Shapley value, we define the cooperative game \( (N, v) \), where:
\[
v(S) = \left| \bigcup_{i \in S} B_i \right|
\]
for any coalition \( S \subseteq N \). The Shapley value of player \( i \in N \) is computed as:
\[
\phi_i = \sum_{j \in B_i} \frac{1}{|\{k \in N \mid j \in B_k\}|}
\]
This reflects the fractional contribution of each player to the collective knowledge pool, accounting for shared ownership.

\medskip

\textbf{Simulation Task one:}
\begin{itemize}
    \item Initialize a graph with five players:
    \begin{itemize}
        \item \( v_1 \) connected to \( v_2 \) and \( v_3 \)
        \item \( v_2 \) connected to \( v_3 \)
        \item \( v_3 \) connected to \( v_4 \) and \( v_5 \)
        \item \( v_4 \) connected to \( v_5 \)
    \end{itemize}
    \item Include a \texttt{tg} player with \( k = 1 \)
    \item Run the simulation for 10 stages
    \item In one simulation, let \texttt{tg} choose an edge randomly
    \item In another simulation, let \texttt{tg} always add the edge \( (v_1, v_4) \)
    \item Compute and plot:
    \begin{itemize}
        \item The cumulative reward of \texttt{tg} after each stage
        \item The knowledge growth of each player after each stage
        \item The Shapley value \( \phi_i \) of each player after each stage
    \end{itemize}
\end{itemize}

\textbf{Simulation Task two:}
\begin{itemize}
    \item Initialize a graph with ten players:
    \begin{itemize}
        \item \( v_1 \), \( v_2 \) to \( v_5 \) connected to each other.
        \item \( v_5 \), \( v_6 \) to  \( v_{10} \) connected to each other. 
    \end{itemize}
    \item Include a \texttt{tg} player with \( k = 5 \)
    \item Run the simulation for 10 stages
    \item In one simulation, let \texttt{tg} choose to connect 5 pairs of vertices that are not connected.  For example, \texttt{tg} may choose to connect \( v_1 \) with \( v_{10} \).
    \item In another simulation, let \texttt{tg} always connect \( v_4 \) with \( v_6 \) to. \( v_{10} \).
    \item Compute and plot:
    \begin{itemize}
        \item The cumulative reward of \texttt{tg} after each stage.
        \item The knowledge growth of each player after each stage.
        \item The Shapley value \( \phi_i \) of each player after each stage. 
    \end{itemize}
\end{itemize}

Report that a vertex left the organization and at which stage of the simulation.  Plot its knowledge growth up to the stage of the simulation in which it left the organization.

\end{humanprompt}

\subsection{Ambiguities related to leaving the organization}

\begin{humanprompt}

I used the following prompt - "I added to the following specification a feature of leaving the organization. Please point out ambiguities related to that feature. Produce the report as a latex subsection" 

\end{humanprompt}

\subsubsection{Ambiguities and Inconsistencies in the ``Leaving the Organization'' Feature}

\begin{aireply}

\textbf{AI reply:} \\
The addition of the feature allowing a vertex (inventor) to leave the organization introduces several ambiguities and potential inconsistencies in the specification. These issues should be clarified before implementing or simulating the system.

\begin{enumerate}
    \item \textbf{Ambiguity in selection of departing vertex:}
    \begin{itemize}
        \item The specification states that ``the vertex \(v_i\) with the highest knowledge leaves the organization in probability \( P = \frac{|B_i|}{|\bigcup_{i \in N} B_i|} \).''  
        \item It is not clear how ties are resolved if multiple vertices share the same maximum knowledge level \( |B_i| \).
        \item It is also unclear whether the probability is applied only to that single vertex or whether multiple vertices can potentially leave in the same stage.
    \end{itemize}

    \item \textbf{Ambiguity in timing of departure within a stage:}
    \begin{itemize}
        \item The order of operations within each stage is not explicitly defined. Specifically, it is ambiguous whether the departure decision occurs:
        \begin{enumerate}
            \item before temporary edges are added,
            \item after knowledge generation,
            \item or at the end of the stage after all updates.
        \end{enumerate}
        \item This timing affects which vertices participate in knowledge sharing and which are eligible for temporary edges.
    \end{itemize}

    \item \textbf{Ambiguity in the definition of the departure probability denominator:}
    \begin{itemize}
        \item The denominator \( |\bigcup_{i \in N} B_i| \) depends on the current set \( N \) of players.
        \item Once a vertex leaves, the value of \( N \) changes, which can alter subsequent probabilities. The specification does not specify whether the denominator should be computed \emph{before} or \emph{after} the removal of the vertex.
    \end{itemize}

    \item \textbf{Inconsistency with the definition of the Shapley value:}
    \begin{itemize}
        \item The Shapley value formula assumes a fixed set of players \( N \). When a vertex leaves, it is unclear whether:
        \begin{enumerate}
            \item the Shapley value should be recomputed using the reduced set \( N' = N \setminus \{v_i\} \), or
            \item the leaving vertex retains its previously computed Shapley value.
        \end{enumerate}
        \item This affects the interpretation of contributions over time and the comparability of Shapley values across stages.
    \end{itemize}

    \item \textbf{Ambiguity in knowledge persistence:}
    \begin{itemize}
        \item When a vertex leaves, it is unclear whether its knowledge set \( B_i \) is permanently removed or remains as part of the collective knowledge \( \bigcup_{i \in N} B_i \).
        \item If its knowledge is removed, the denominator in the probability formula and the cooperative game value function \( v(S) \) change significantly.
        \item If its knowledge remains, then the organization may continue to benefit from knowledge of a departed member, which may contradict the intended semantics of ``leaving.''
    \end{itemize}

    \item \textbf{Ambiguity in reporting and visualization:}
    \begin{itemize}
        \item The specification requires to ``report that a vertex left the organization and at which stage of the simulation,'' but it does not specify how this report integrates with cumulative plots.
        \item For example, it is unclear whether the knowledge growth plot should:
        \begin{enumerate}
            \item stop at the stage the vertex leaves,
            \item continue with zero growth thereafter,
            \item or exclude the vertex entirely from subsequent plots.
        \end{enumerate}
    \end{itemize}

    \item \textbf{Potential inconsistency in probabilistic interpretation:}
    \begin{itemize}
        \item The probability \( P = \frac{|B_i|}{|\bigcup_{i \in N} B_i|} \) can exceed 1 if multiple vertices share identical knowledge or if \( \bigcup_{i \in N} B_i \) is not properly updated.
        \item There is no normalization or upper bound enforcement described to ensure \( 0 \leq P \leq 1 \).
    \end{itemize}

    \item \textbf{Ambiguity in facilitator (\texttt{tg}) interaction:}
    \begin{itemize}
        \item The facilitator's reward depends on temporary edges, but if a vertex involved in a rewarded edge leaves in the same stage, it is unclear whether that reward still counts.
        \item The specification should clarify whether the facilitator’s reward is computed before or after departures.
    \end{itemize}
\end{enumerate}

\medskip

Clarifying these issues will ensure consistent simulation results and proper interpretation of the ``leaving the organization'' dynamics in both the knowledge and Shapley value computations.

\end{aireply}

\subsection{Clarification of the leave the organization feature}

\begin{humanprompt}

\textbf{Prompt:} \\
Incorporate the following clarification and an updated description as a latex subsection.  

\begin{enumerate}
    \item If there is more than one inventor with maximal knowledge randomly choose one of them to leave the organization.  Only one inventor leave the organization at a simulation stage. 
    \item Have the step of an inventor leaving the organization occur as the first step at each stage of the simulation. If in the first step of the stage an inventor \( v_i \) left the organization, All subsequent steps of a stage should be done on the updated graph, \( G \) and the knowledge sets associated only with the new set of inventors in \( G\) that does not include \( v_i \).  Specifically, the Shapely value should be calculated on the updated graph. 
    \item If $v_i$ leaves the organization, its set of knowledge items, $B_{v_i}$, is no longer available to the organization and is not used in the different calculations. 
    \item If \( v_i \) leaves the organization the plots should stop updating its associated different plots.  For example the knowledge plot should represents the growth of the knowledge of \( v_i \) up to the point that it left the organization. 
    
\end{enumerate}

\end{humanprompt}

\subsection{Revised Simulation Specification with the ``Leaving the Organization'' Feature}

\begin{aireply}

\textbf{AI reply:} \\
We are given a graph \( G \) whose vertices represent a set of inventors \( v_1, v_2, \ldots, v_n \). An edge \( (v_i, v_j) \) indicates that \( v_i \) and \( v_j \) can communicate. Each inventor \( v_i \) is associated with a set of knowledge items \( B_i \), initialized to the empty set.

\medskip

The game is played in discrete stages. In each stage, the following sequence of operations is executed:

\begin{enumerate}
    \item \textbf{Leaving the organization step:}
    \begin{itemize}
        \item Identify the inventor(s) with the maximal knowledge level, i.e., those for whom \( |B_i| \) is maximal.
        \item If there is more than one such inventor, randomly choose one inventor \( v_i \) among them.
        \item The inventor \( v_i \) leaves the organization with probability
        \[
        P = \frac{|B_i|}{\left| \bigcup_{j \in N} B_j \right|}.
        \]
        \item If \( \left| \bigcup_{j \in N} B_j \right| = 0 \), then \( P = 0 \).
        \item If \( v_i \) leaves the organization:
        \begin{itemize}
            \item Remove vertex \( v_i \) and all its incident edges from the graph \( G \).
            \item Remove its knowledge set \( B_{v_i} \) from all further calculations.
            \item The set of active inventors becomes \( N' = N \setminus \{v_i\} \).
            \item All subsequent operations in this stage are performed using the updated graph \( G' \) and inventor set \( N' \).
        \end{itemize}
        \item The departure is recorded in the simulation output, including the stage number and the identifier of the departing vertex. The plots for \( v_i \) (e.g., knowledge growth) are truncated at this stage.
    \end{itemize}

    \item \textbf{Facilitator edge addition:}
    \begin{itemize}
        \item A facilitator player, denoted \texttt{tg}, may add up to \( k \) temporary edges to the graph, where \( k < n^2 \).
        \item These edges exist only for the duration of the stage and are removed afterward.
    \end{itemize}

    \item \textbf{Clique selection and knowledge creation:}
    \begin{itemize}
        \item A single clique is selected randomly from the graph, including the temporary edges.
        \item A new knowledge item is created for the selected clique with probability:
        \[
        P = \frac{\left| \bigcup_{i \in C} B_i \right|}{\left| \bigcup_{i \in N'} B_i \right|},
        \]
        where \( C \) is the set of clique members and \( N' \) is the current set of active inventors.
        \item If the denominator is zero, the probability is defined to be 1.
        \item If a new knowledge item is created, it is added to the knowledge set \( B_i \) of each clique member \( i \in C \).
    \end{itemize}

    \item \textbf{Facilitator reward:}
    \begin{itemize}
        \item If the selected clique contains any temporary edge added by \texttt{tg}, then \texttt{tg} receives a reward of 1 for this stage.
    \end{itemize}

    \item \textbf{Shapley value computation:}
    \begin{itemize}
        \item To evaluate contributions using the Shapley value, define the cooperative game \( (N', v) \), where:
        \[
        v(S) = \left| \bigcup_{i \in S} B_i \right|, \quad \text{for any } S \subseteq N'.
        \]
        \item The Shapley value of player \( i \in N' \) is computed as:
        \[
        \phi_i = \sum_{j \in B_i} \frac{1}{|\{k \in N' \mid j \in B_k\}|}.
        \]
        \item This value reflects each inventor's fractional contribution to the collective knowledge pool, considering shared ownership.
    \end{itemize}
\end{enumerate}

\medskip

\textbf{Simulation Task One:}
\begin{itemize}
    \item Initialize a graph with five inventors:
    \begin{itemize}
        \item \( v_1 \) connected to \( v_2 \) and \( v_3 \)
        \item \( v_2 \) connected to \( v_3 \)
        \item \( v_3 \) connected to \( v_4 \) and \( v_5 \)
        \item \( v_4 \) connected to \( v_5 \)
    \end{itemize}
    \item Include a facilitator player \texttt{tg} with \( k = 1 \).
    \item Run the simulation for 10 stages.
    \item In one simulation, let \texttt{tg} choose an edge randomly.
    \item In another simulation, let \texttt{tg} always add the edge \( (v_1, v_4) \).
    \item Compute and plot:
    \begin{itemize}
        \item The cumulative reward of \texttt{tg} after each stage.
        \item The knowledge growth of each inventor after each stage.
        \item The Shapley value \( \phi_i \) of each inventor after each stage.
        \item The knowledge trajectory of any inventor who left, up to the stage of departure.
    \end{itemize}
\end{itemize}

\textbf{Simulation Task Two:}
\begin{itemize}
    \item Initialize a graph with ten inventors:
    \begin{itemize}
        \item \( v_1 \) to \( v_5 \) connected to each other.
        \item \( v_5 \) to \( v_{10} \) connected to each other.
    \end{itemize}
    \item Include a facilitator player \texttt{tg} with \( k = 5 \).
    \item Run the simulation for 10 stages.
    \item In one simulation, let \texttt{tg} choose to connect 5 random pairs of currently unconnected vertices (e.g., \( v_1 \) with \( v_{10} \)).
    \item In another simulation, let \texttt{tg} always connect \( v_4 \) with \( v_6, v_7, v_8, v_9, v_{10} \).
    \item Compute and plot:
    \begin{itemize}
        \item The cumulative reward of \texttt{tg} after each stage.
        \item The knowledge growth of each inventor after each stage.
        \item The Shapley value \( \phi_i \) of each inventor after each stage.
        \item The knowledge trajectories of inventors who leave the organization.
    \end{itemize}
\end{itemize}

\textbf{Added based on reivew of the simulation results.:} \\
The simulation is conducted on two graphs.  For each graph, the \( tg \) uses to strategies.  For each strategy we need to get 2 plots for the \( tg \) accumulative reward, accumulative knowledge per player and Shapely value of each player.  Thus, the specification produces 12 plots.

\textbf{Added based on reivew of the simulation results.:} \\
Update the program to include a diagram that for each tg strategy and simulation or graph we see the total amount of knowledge that was produce and the percentage that that left the organization.

\medskip

This revised specification provides an unambiguous, stepwise definition of the simulation process, ensuring consistency in the treatment of inventor departures, graph updates, and cooperative game computations.

\end{aireply}

\subsection{Result of the simulation}

The result of the simulation is given \href{https://colab.research.google.com/drive/1qFuYmyzGHcDNqN-rTAgYEA_L8LuBgEbL#scrollTo=ZDPxsD35tF3s&line=288&uniqifier=1}{here}.  

\textbf{Based on the simulation results.:} \\
Maybe I want inventor to leave at most once in a simulation.

\textbf{Based on the simulation results.:} \\
Adding a diagram of total knowledge produced and total knowledge that left the organization was easier when adding it to the specification and generating than when asking to add it to the existing code.  Adding to the existing code create code that failed at runtime in spite repeated attempts. 

Number of lines of code is now 443 including comments.

%================END LEAVING THE ORGANIZATION

%====================More than one organization

\section{More than one organization}

Next I'm attempting to enhance the simulation to simulate knowledge growth in more than one organization with inventors moving from one organization to another. 

\subsection{Simulation Specification with more than one organization}

\begin{humanprompt}

\textbf{Prompt:} \\
We are given \( k \) graphs \( G_1, \ldots, G_k \) whose vertices represent a set of inventors \( v^l_1, v^l_2, \ldots, v^l_{n_l} \) for the graph $G_l$. An edge \( (v^l_i, v^l_j) \) indicates that \( v^l_i \) and \( v^l_j \) can communicate. Each inventor \( v^l_i \) is associated with a set of knowledge items \( B^l_i \), initialized to the empty set.

\medskip

The game is played in discrete stages. In each stage, the following sequence of operations is executed:

\begin{enumerate}
    \item Randomly choose a graph $G_l$.
    \item \textbf{Leaving the organization step:}
    \begin{itemize}
        \item This stage is executed only if the number of graphs, \( k \), is greater than one.  
        \item Identify the inventor(s) with the maximal knowledge level, i.e., those for whom \( |B^l_i| \) is maximal.
        \item If there is more than one such inventor, randomly choose one inventor \( v^l_i \) among them.
        \item The inventor \( v^l_i \) leaves the organization with probability
        \[
        P = \frac{|B^l_i|}{\left| \bigcup_{j \in N_l} B^l_j \right|}.
        \]
        \item If \( \left| \bigcup_{j \in N_l} B_j \right| = 0 \), then \( P = 0 \).
        \item If \( v^l_i \) leaves the organization:
        \begin{itemize}
            \item Remove vertex \( v^l_i \) and all its incident edges from the graph \( G_l \).
            \item Remove its knowledge set \( B^l_{v^l_i} \) from all further calculations that has to do with \( G_l \).
            \item The set of active inventors in \(G_l\) becomes \( N_l' = N_l \setminus \{v^l_i\} \).
            \item All subsequent operations in this stage are performed using the updated graph \( G_l' \) and inventor set \( N_l' \).
            \item Randomly choose a graph \(G_r\) for which \( r \ne l\).  Add \(v^l_i\) to \( G_r\) with knowledge set \( B^l_i\).  That is \(v^l_i\) is moved to  \( G_r\) with its knowledge set.     
        \end{itemize}
        \item The departure is recorded in the simulation output, including the stage number and the identifier of the departing vertex. It is explicitly stated which graph the vertex left and which graph it joined. The plots for \( v^l_i \) (e.g., knowledge growth) continue to be updated but it is clearly marked in the plot to which graph the vertex belong to. 
    \end{itemize}

    \item \textbf{Facilitator edge addition:}
    \begin{itemize}
        \item A facilitator player, denoted \texttt{\(tg_l\)}, may add up to \( k \) temporary edges to the graph \(G_l\), where \( k < n^2 \).
        \item These edges exist only for the duration of the stage and are removed afterward.
    \end{itemize}

    \item \textbf{Clique selection and knowledge creation:}
    \begin{itemize}
        \item A single clique is selected randomly from the graph, including the temporary edges.
        \item A new knowledge item is created for the selected clique with probability:
        \[
        P = \frac{\left| \bigcup_{i \in C} B^l_i \right|}{\left| \bigcup_{i \in N^{l'}} B^l_i \right|},
        \]
        where \( C \) is the set of clique members and \( N^{l'} \) is the current set of active inventors.
        \item If the denominator is zero, the probability is defined to be 1.
        \item If a new knowledge item is created, it is added to the knowledge set \( B^l_i \) of each clique member \( i \in C \).
    \end{itemize}

    \item \textbf{Facilitator reward:}
    \begin{itemize}
        \item If the selected clique contains any temporary edge added by \texttt{tg}, then \texttt{tg} receives a reward of 1 for this stage.
    \end{itemize}

    \item \textbf{Shapley value computation:}
    \begin{itemize}
        \item To evaluate contributions using the Shapley value, define the cooperative game \( (N^{l'}, v) \), where:
        \[
        v(S) = \left| \bigcup_{i \in S} B^l_i \right|, \quad \text{for any } S \subseteq N^{l'}.
        \]
        \item The Shapley value of player \( i \in N^{l'} \) is computed as:
        \[
        \phi_i = \sum_{j \in B^l_i} \frac{1}{|\{k \in N^{l'} \mid j \in B^l_k\}|}.
        \]
        \item This value reflects each inventor's fractional contribution to the collective knowledge pool, considering shared ownership.
    \end{itemize}
\end{enumerate}

\medskip

\textbf{Simulation Task One:}
\begin{itemize}
    \item In this simulation the number of graphs is \( 1 \) 
    \item Initialize a graph with five inventors:
    \begin{itemize}
        \item \( v_1 \) connected to \( v_2 \) and \( v_3 \)
        \item \( v_2 \) connected to \( v_3 \)
        \item \( v_3 \) connected to \( v_4 \) and \( v_5 \)
        \item \( v_4 \) connected to \( v_5 \)
    \end{itemize}
    \item Include a facilitator player \texttt{tg} with \( k = 1 \).
    \item Run the simulation for 10 stages.
    \item In one simulation, let \texttt{tg} choose an edge randomly.
    \item In another simulation, let \texttt{tg} always add the edge \( (v_1, v_4) \).
    \item Compute and plot:
    \begin{itemize}
        \item The cumulative reward of \texttt{tg} after each stage.
        \item The knowledge growth of each inventor after each stage.
        \item The Shapley value \( \phi_i \) of each inventor after each stage.
        \item The knowledge trajectory of any inventor who left, up to the stage of departure.
    \end{itemize}
\end{itemize}

\textbf{Simulation Task Two:}
\begin{itemize}
    \item In this simulation the number of graphs is \( 1 \) 
    \item Initialize a graph with ten inventors:
    \begin{itemize}
        \item \( v_1 \) to \( v_5 \) connected to each other.
        \item \( v_5 \) to \( v_{10} \) connected to each other.
    \end{itemize}
    \item Include a facilitator player \texttt{tg} with \( k = 5 \).
    \item Run the simulation for 10 stages.
    \item In one simulation, let \texttt{tg} choose to connect 5 random pairs of currently unconnected vertices (e.g., \( v_1 \) with \( v_{10} \)).
    \item In another simulation, let \texttt{tg} always connect \( v_4 \) with \( v_6, v_7, v_8, v_9, v_{10} \).
    \item Compute and plot:
    \begin{itemize}
        \item The cumulative reward of \texttt{tg} after each stage.
        \item The knowledge growth of each inventor after each stage.
        \item The Shapley value \( \phi_i \) of each inventor after each stage.
        \item The knowledge trajectories of inventors who leave the organization.
    \end{itemize}
\end{itemize}

\textbf{Simulation Task Three:}

\begin{itemize}
    \item In this simulation the number of graphs is \( 2 \) 
    \begin{itemize}
        \item Initialize \( G_1 \) with ten inventors:
        \begin{itemize}
            \item \( v^1_1 \) to \( v^1_5 \) connected to each other.
            \item \( v^1_5 \) to \( v^1_{10} \) connected to each other.
        \end{itemize}
    \end{itemize}
    \begin{itemize}
        \item Initialize \( G_2 \) with ten inventors:
        \begin{itemize}
            \item \( v^2_1 \) to \( v^2_5 \) connected to each other.
            \item \( v^2_5 \) to \( v^2_{10} \) connected to each other.
        \end{itemize}
    \end{itemize}            
    \item Include a facilitator player \texttt{\(tg_1\)} with \( k = 5 \).  Let \texttt{\(tg_1\)} choose to connect 5 random pairs of currently unconnected vertices (e.g., \( v^1_1 \) with \( v^1_{10} \)).
    \item Include a facilitator, \texttt{\(tg_2\)} with \( k = 1 \).  Let \texttt{\(tg_2\)} choose to connect 1 random pairs of currently unconnected vertices (e.g., \( v^1_1 \) with \( v^1_{10} \)).
    \item Compute and plot:
    \begin{itemize}
        \item The cumulative reward of \texttt{\(tg_1\)} after each stage.
        \item The cumulative reward of \texttt{\(tg_2\)} after each stage.
        \item The knowledge growth of each inventor after each stage.
        \item The Shapley value \( \phi_i \) of each inventor after each stage.
    \end{itemize}
    \item Run the simulation for 10 stages.

\end{itemize}

Each plot should be clearly identified clearly associated with the simulation task and strategy used by the $tg_i$s.  In addition the plots should be both written to a file and standard output.

\
When there is more than one graph, create a table that represents what percentage of an inventor knowledge was created in each of the organizations. 

\end{humanprompt}

\subsection{Ambiguities and Inconsistencies in the Multi-Organization Specification}

\begin{aireply}

\textbf{AI reply to ambiguity query:} \\
The following points identify ambiguities and inconsistencies that arise in the specification when simulating more than one organization.

\subsubsection*{1. Ambiguity in Inventor Identity Across Organizations}
\begin{itemize}
    \item It is unclear whether an inventor who moves from organization \( G_l \) to \( G_r \) retains the same global identifier (e.g., \( v^l_i \)) or receives a new one (e.g., \( v^r_j \)).  
    \item The notation \( v^l_i \) suggests that the superscript indicates the organization, but when the inventor moves, the new notation \( v^r_i \) would imply a different entity, not the same inventor.  
    \item This ambiguity affects the interpretation of knowledge trajectories and the computation of cumulative statistics that track inventors across organizations.
\end{itemize}

\subsubsection*{2. Inconsistency in Knowledge Transfer Upon Movement}
\begin{itemize}
    \item When an inventor moves from \( G_l \) to \( G_r \), it is stated that the inventor carries its knowledge set \( B^l_i \).  
    \item However, it is not specified whether the inventor’s knowledge is:
    \begin{enumerate}
        \item merged into the new organization’s collective knowledge pool, or  
        \item kept separate until interactions occur in \( G_r \).  
    \end{enumerate}
    \item It is also unclear whether this transfer influences the knowledge denominator terms in later probability computations within \( G_r \).
\end{itemize}

\subsubsection*{3. Ambiguity in Facilitator Assignment}
\begin{itemize}
    \item The text introduces multiple facilitators (\texttt{tg\_1}, \texttt{tg\_2}) but does not clarify whether each facilitator operates only within its associated graph (\( G_1 \), \( G_2 \)) or may influence other graphs when inventors move between them.
    \item The notation “a facilitator player, denoted \texttt{\( tg_l \)}” suggests a per-organization facilitator, but this is inconsistent with later tasks where facilitators are independently parameterized.
\end{itemize}

\subsubsection*{4. Inconsistency in Probability Definition After Movement}
\begin{itemize}
    \item The probability of leaving an organization is defined as
    \[
    P = \frac{|B^l_i|}{\left| \bigcup_{j \in N_l} B^l_j \right|}.
    \]
    However, after the inventor leaves and joins \( G_r \), there is no clarification on how this probability behaves in future iterations:
    \begin{itemize}
        \item Does the inventor immediately become eligible to leave \( G_r \) again in the next stage?
        \item Are there any restrictions or cooldowns on movement?
    \end{itemize}
\end{itemize}

\subsubsection*{5. Ambiguity in the Stage Ordering and Global Stage Definition}
\begin{itemize}
    \item The specification says, “Randomly choose a graph \( G_l \)” at the start of each stage, implying that only one graph is active per stage.  
    \item It is unclear whether a “stage” is global (affecting all organizations) or local (executed independently per graph).  
    \item This affects the timing of departures, facilitator actions, and knowledge updates — especially when inventors move between organizations.
\end{itemize}

\subsubsection*{6. Inconsistency in the Use of the Facilitator Parameter \( k \)}
\begin{itemize}
    \item The facilitator parameter \( k \) is reused for different meanings:
    \begin{itemize}
        \item As the number of graphs \( (k) \).
        \item As the number of temporary edges a facilitator may add.
    \end{itemize}
    \item This overloading of notation can lead to confusion when interpreting the rules for multi-organization simulations.
\end{itemize}

\subsubsection*{7. Ambiguity in Global Knowledge Accounting}
\begin{itemize}
    \item The specification requests a table showing “what percentage of an inventor’s knowledge was created in each organization.”  
    \item However, there is no definition of how to attribute knowledge to an organization:
    \begin{itemize}
        \item Is knowledge attributed based on the organization where the knowledge item was first created?
        \item Or based on where the inventor was located when it was acquired?
    \end{itemize}
    \item This ambiguity affects both the tracking of knowledge evolution and the validity of comparative plots between organizations.
\end{itemize}

\subsubsection*{8. Missing Clarification on Re-Integration of Inventors}
\begin{itemize}
    \item It is not specified whether an inventor can return to its original organization after leaving.
    \item If this can happen, there is no description of how edges and prior relationships in \( G_l \) are restored (e.g., are previous connections reactivated or randomized?).
\end{itemize}

\subsubsection*{9. Inconsistency in Simulation Tasks}
\begin{itemize}
    \item In \textbf{Simulation Task Three}, the specification clearly intends to model two organizations, yet facilitators \texttt{\( tg_1 \)} and \texttt{\( tg_2 \)} both refer to operations within \( G_1 \) (not \( G_1 \) and \( G_2 \) respectively).  
    \item This contradicts the initial definition of having separate facilitators per organization and leaves unclear whether \( \texttt{tg}_2 \) operates in \( G_2 \).
\end{itemize}

\subsubsection*{10. Ambiguity in Plotting and Output Association}
\begin{itemize}
    \item The text requires that “each plot should be clearly associated with the simulation task and strategy used by the \( tg_i \)s,” but does not specify how inventors moving between organizations are to be represented in plots (e.g., continuous line with color change, or split plots).
    \item It is also unclear whether cumulative statistics (e.g., Shapley value evolution) are computed per organization or globally after movement.
\end{itemize}

\bigskip
\textbf{Summary:}  
The main inconsistencies center on unclear definitions of identity, ownership, and timing across organizations. To ensure an unambiguous simulation model, the specification should precisely define:
\begin{enumerate}
    \item Global inventor identity and how it changes upon movement.
    \item Whether stages are global or organization-specific.
    \item How knowledge and probabilities are recalculated after transfer.
    \item Which facilitator acts in which organization.
    \item How to attribute knowledge to organizations for reporting.
\end{enumerate}

\end{aireply}

\subsection{Clarifying ambiguity}

\begin{humanprompt}

\textbf{Ambiguity clarifying prompt:} \\

\begin{enumerate}
    \item An inventor is a entity that has a unique id.  When an inventor leaves an organization and joins an organization it retains its unique id.  
    \item An inventor $v$ has a list of knowledge items $B_v$ associated with her.  When she joins an organization or leaves an organization the association between $v$ and $B_v$ is retained.  If $v$ leaves $G_l$ and joins $G_r$ then $G_r$ now has a new inventor $v$ who owns knowledge items $B_v$.  Next ins any simulation calculations that include $v$ as a member of the organization $G_r$ its associated knowledge item $B_v$ should be taken into account and used appropriately in the calculation of $v(S)$, etc.  Thus, for example if $v \in S$, $v(S) = |\bigcup_{i \in S} B_i |$  will include $B_v$ as well. 
    \item A facilitator $tg_i$ is associated with an organization $G_i$.  As a consequence, $tg_i$ can only add edges to the graph $G_i$.
    \item In a stage, leaving the organization step is preformed in probability $p$, for now have $p = 0.2$
    \item Each stage is preformed on a randomly chosen graph $G_l$.  If within the stage an inventor $v$ leaves $G_l$ and joins $G_r$ then the graph $G_r$ is impacted as well.  In such as case $G_r$ would have another investor $v$ with its associated knowledge items $B_v$. 
    \item Knowledge is attributed based on the organization where the knowledge item
was first created.
    \item An inventor can go back to an organization.   To keep it simple for now no edges are avaiaible when the inventor rejoins the organization.
    \item In simulation task three $tg_1$ should work on $G_1$ and $tg_2$ on $G_2$.
    \item Inventors that moved organization should be represented in the plots by continuous line with
color change.
    \item Cumulative statistics should be calculated globally after each stage. 
\end{enumerate}

\end{humanprompt}

\subsection{Multi-Organization Simulation Specification}

\begin{humanprompt}

\textbf{Prompt to obtain an updated specification:} \\
here you are given a specification of a simulation, your ambiguity report and clarification by me of the ambiguities. Please create a new latex subsection specification of the simulation that incorporate the clarifications.  

\end{humanprompt}

\textbf{After some clarification questions the AI produced the following updated specification. } \\

\begin{aireply}

This subsection defines a simulation model for knowledge creation and diffusion across multiple organizations. Each organization is represented as a graph, and inventors may move between organizations over time. The goal is to study how network structure, mobility, and facilitator interventions influence knowledge growth and individual contribution.

\subsubsection*{Model Entities and Notation}

\paragraph{Organizations and Inventors.}
There are \(m \ge 1\) organizations, each represented as a graph \(G_l = (N_l, E_l)\) for \(l = 1, \ldots, m\). An inventor is an individual with a \textbf{global unique identifier}. If inventor \(i\) is currently in organization \(l\), we denote this inventor by \(v_i^{(l)}\). The subscript \(i\) never changes, even if the inventor moves between organizations; the superscript indicates the current organization membership.

\paragraph{Knowledge Sets.}
Each inventor \(v_i^{(l)}\) owns a set of knowledge items \(B_i\) (which persists with the inventor globally, regardless of organization). The set of knowledge in organization \(G_l\) at a given time is defined as
\[
K_l = \bigcup_{v_i^{(l)} \in N_l} B_i .
\]

\paragraph{Knowledge Attribution.}
Each knowledge item is attributed to the organization in which it was first created. If knowledge item \(x\) was generated in organization \(G_l\), its origin is recorded as \(\text{origin}(x) = l\). This information will later be used to compute, for each inventor, the percentage of their knowledge originating from each organization.

\paragraph{Facilitators.}
Each organization \(G_l\) has an associated facilitator denoted \texttt{tg\(_l\)}. Facilitator \texttt{tg\(_l\)} may add temporary edges to \(G_l\) only; it cannot modify other organizations.

\paragraph{Global Stage.}
Time evolves in discrete \textbf{global stages}. In each stage, exactly one organization is selected uniformly at random to execute the stage operations. Some operations (e.g., inventor movement) may affect multiple organizations.

\subsubsection*{Stage Operations}

Each global stage consists of the following sequence:

\begin{enumerate}
    \item \textbf{Organization Selection.}  
    Select an organization \(G_l\) uniformly at random from \(\{G_1, \ldots, G_m\}\). All subsequent steps in the stage apply with respect to \(G_l\).

    \item \textbf{Leaving-Organization Step (with probability \(p = 0.2\)).}
    This step is executed only if \(m > 1\).
    \begin{enumerate}
        \item Identify inventor(s) in \(G_l\) with maximal knowledge size:
        \[
        M_l = \left\{ v_i^{(l)} \in N_l \mid |B_i| = \max_{v_j^{(l)} \in N_l} |B_j| \right\}.
        \]
        \item If \(|M_l| > 1\), select one inventor uniformly at random from \(M_l\); denote this inventor by \(v_i^{(l)}\).
        \item Compute the probability of leaving:
        \[
        P_{\text{leave}} = 
        \begin{cases}
        \dfrac{|B_i|}{|K_l|}, & \text{if } |K_l| > 0, \\
        0, & \text{if } |K_l| = 0 .
        \end{cases}
        \]
        \item With probability \(P_{\text{leave}}\), the inventor leaves \(G_l\):
        \begin{itemize}
            \item Remove \(v_i^{(l)}\) and all incident edges from \(G_l\):  
            \( N_l \leftarrow N_l \setminus \{v_i^{(l)}\} \).
            \item Select an organization \(G_r\), \(r \ne l\), uniformly at random.
            \item Add the inventor to \(G_r\) with no edges:  
            \( N_r \leftarrow N_r \cup \{v_i^{(r)}\} \).  
            The inventor retains the same ID \(i\) and knowledge set \(B_i\).
            \item If an inventor returns to a previously visited organization, they rejoin with no pre-existing edges.
        \end{itemize}
        \item The movement event (departure and destination) is recorded for output and plotting.
    \end{enumerate}

    \item \textbf{Facilitator Temporary Edge Addition.}
    Facilitator \texttt{tg\(_l\)} may add up to \(k_l\) temporary edges to \(G_l\), where \(k_l \ge 0\) is a parameter of the simulation. All temporary edges are removed at the end of the stage.

    \item \textbf{Clique Selection and Knowledge Creation.}
    \begin{enumerate}
        \item A clique \(C \subseteq N_l\) is selected uniformly at random from all cliques in \(G_l\), considering both permanent and temporary edges.
        \item A new knowledge item is created for clique \(C\) with probability:
        \[
        P_{\text{new}} = 
        \begin{cases}
        \dfrac{\left|\bigcup_{v_i^{(l)} \in C} B_i\right|}{|K_l|}, & \text{if } |K_l| > 0, \\
        1, & \text{if } |K_l| = 0 .
        \end{cases}
        \]
        \item If a new item \(x\) is created:
        \begin{itemize}
            \item Add \(x\) to each \(B_i\) for all \(v_i^{(l)} \in C\).
            \item Record \(\text{origin}(x) = l\).
        \end{itemize}
    \end{enumerate}

    \item \textbf{Facilitator Reward.}
    Facilitator \texttt{tg\(_l\)} receives a reward of 1 for the stage if the selected clique contains at least one temporary edge added by \texttt{tg\(_l\)}.

    \item \textbf{Shapley Value Computation.}
    For organization \(G_l\), define the cooperative game
    \[
    (N_l, v), \qquad v(S) = \left|\bigcup_{v_i^{(l)} \in S} B_i\right| \quad \forall\, S \subseteq N_l.
    \]
    The Shapley value of an inventor \(v_i^{(l)}\) is computed as:
    \[
    \phi_i^{(l)} = \sum_{x \in B_i} \frac{1}{|\{v_j^{(l)} \in N_l : x \in B_j\}|}.
    \]

    \item \textbf{Global Statistics Update.}
    After every stage:
    \begin{itemize}
        \item Global cumulative statistics for all inventors are updated.
        \item Plots for inventors that changed organizations show continuous curves, with color changes marking organization transitions.
    \end{itemize}
\end{enumerate}

\subsubsection*{Reporting Requirements}

At the end of the simulation:
\begin{itemize}
    \item All plots must clearly indicate the simulation task and facilitator strategy.
    \item Plots must be written to file and printed to standard output.
    \item A table must be generated for each inventor, reporting the percentage of their knowledge originating from each organization.
\end{itemize}

\end{aireply} 

\subsection{Generated simulation}

Reviewed the corrected specification and it looked cleaner.  Created simulation based on the unambiguous specification above is given here \href{https://colab.research.google.com/drive/1qFuYmyzGHcDNqN-rTAgYEA_L8LuBgEbL#scrollTo=NpGt5920KSF7&line=274&uniqifier=1}{link}.  Organizations were created by the AI.  In fact, the AI dropped the three simulations tasks for the above updated unambiguous specification that I have to bring back to the specification.  The AI also only printed the plots to a file and did not print them to standard output although the specification instructed it to do so. Number of lines of code is now $395$.  

%====More than one organization - adding back simulation tasks

\subsection{More than one organization - adding back simulation tasks}
\label{twoOrgSpec}

\begin{humanprompt}

\textbf{I attempted to create a clean specification adding back the simulation tasks that where dropped. } \\ 

This subsection defines a simulation model for knowledge creation and diffusion across multiple organizations. Each organization is represented as a graph, and inventors may move between organizations over time. The goal is to study how network structure, mobility, and facilitator interventions influence knowledge growth and individual contribution.

\subsubsection*{Model Entities and Notation}

\paragraph{Organizations and Inventors.}
There are \(m \ge 1\) organizations, each represented as a graph \(G_l = (N_l, E_l)\) for \(l = 1, \ldots, m\). An inventor is an individual with a \textbf{global unique identifier}. If inventor \(i\) is currently in organization \(l\), we denote this inventor by \(v_i^{(l)}\). The subscript \(i\) never changes, even if the inventor moves between organizations; the superscript indicates the current organization membership.

\paragraph{Knowledge Sets.}
Each inventor \(v_i^{(l)}\) owns a set of knowledge items \(B_i\) (which persists with the inventor globally, regardless of organization). The set of knowledge in organization \(G_l\) at a given time is defined as
\[
K_l = \bigcup_{v_i^{(l)} \in N_l} B_i .
\]

\paragraph{Knowledge Attribution.}
Each knowledge item is attributed to the organization in which it was first created. If knowledge item \(x\) was generated in organization \(G_l\), its origin is recorded as \(\text{origin}(x) = l\). This information will later be used to compute, for each inventor, the percentage of their knowledge originating from each organization.

\paragraph{Facilitators.}
Each organization \(G_l\) has an associated facilitator denoted \texttt{tg\(_l\)}. Facilitator \texttt{tg\(_l\)} may add temporary edges to \(G_l\) only; it cannot modify other organizations.

\paragraph{Global Stage.}
Time evolves in discrete \textbf{global stages}. In each stage, exactly one organization is selected uniformly at random to execute the stage operations. Some operations (e.g., inventor movement) may affect multiple organizations.

\subsubsection*{Stage Operations}

Each global stage consists of the following sequence:

\begin{enumerate}
    \item \textbf{Organization Selection.}  
    Select an organization \(G_l\) uniformly at random from \(\{G_1, \ldots, G_m\}\). All subsequent steps in the stage apply with respect to \(G_l\).

    \item \textbf{Leaving-Organization Step (with probability \(p = 0.2\)).}
    This step is executed only if \(m > 1\).
    \begin{enumerate}
        \item Identify inventor(s) in \(G_l\) with maximal knowledge size:
        \[
        M_l = \left\{ v_i^{(l)} \in N_l \mid |B_i| = \max_{v_j^{(l)} \in N_l} |B_j| \right\}.
        \]
        \item If \(|M_l| > 1\), select one inventor uniformly at random from \(M_l\); denote this inventor by \(v_i^{(l)}\).
        \item Compute the probability of leaving:
        \[
        P_{\text{leave}} = 
        \begin{cases}
        \dfrac{|B_i|}{|K_l|}, & \text{if } |K_l| > 0, \\
        0, & \text{if } |K_l| = 0 .
        \end{cases}
        \]
        \item With probability \(P_{\text{leave}}\), the inventor leaves \(G_l\):
        \begin{itemize}
            \item Remove \(v_i^{(l)}\) and all incident edges from \(G_l\):  
            \( N_l \leftarrow N_l \setminus \{v_i^{(l)}\} \).
            \item Select an organization \(G_r\), \(r \ne l\), uniformly at random.
            \item Add the inventor to \(G_r\) with no edges:  
            \( N_r \leftarrow N_r \cup \{v_i^{(r)}\} \).  
            The inventor retains the same ID \(i\) and knowledge set \(B_i\).
            \item If an inventor returns to a previously visited organization, they rejoin with no pre-existing edges.
        \end{itemize}
        \item The movement event (departure and destination) is recorded for output and plotting.
    \end{enumerate}

    \item \textbf{Facilitator Temporary Edge Addition.}
    Facilitator \texttt{tg\(_l\)} may add up to \(k_l\) temporary edges to \(G_l\), where \(k_l \ge 0\) is a parameter of the simulation. All temporary edges are removed at the end of the stage.

    \item \textbf{Clique Selection and Knowledge Creation.}
    \begin{enumerate}
        \item A clique \(C \subseteq N_l\) is selected uniformly at random from all cliques in \(G_l\), considering both permanent and temporary edges.
        \item A new knowledge item is created for clique \(C\) with probability:
        \[
        P_{\text{new}} = 
        \begin{cases}
        \dfrac{\left|\bigcup_{v_i^{(l)} \in C} B_i\right|}{|K_l|}, & \text{if } |K_l| > 0, \\
        1, & \text{if } |K_l| = 0 .
        \end{cases}
        \]
        \item If a new item \(x\) is created:
        \begin{itemize}
            \item Add \(x\) to each \(B_i\) for all \(v_i^{(l)} \in C\).
            \item Record \(\text{origin}(x) = l\).
        \end{itemize}
    \end{enumerate}

    \item \textbf{Facilitator Reward.}
    Facilitator \texttt{tg\(_l\)} receives a reward of 1 for the stage if the selected clique contains at least one temporary edge added by \texttt{tg\(_l\)}.

    \item \textbf{Shapley Value Computation.}
    For organization \(G_l\), define the cooperative game
    \[
    (N_l, v), \qquad v(S) = \left|\bigcup_{v_i^{(l)} \in S} B_i\right| \quad \forall\, S \subseteq N_l.
    \]
    The Shapley value of an inventor \(v_i^{(l)}\) is computed as:
    \[
    \phi_i^{(l)} = \sum_{x \in B_i} \frac{1}{|\{v_j^{(l)} \in N_l : x \in B_j\}|}.
    \]

    \item \textbf{Global Statistics Update.}
    After every stage:
    \begin{itemize}
        \item Global cumulative statistics for all inventors are updated.
        \item Plots for inventors that changed organizations show continuous curves, with color changes marking organization transitions.
    \end{itemize}
\end{enumerate}

\textbf{Simulation Task One:}
\begin{itemize}
    \item In this simulation the number of graphs is \( 1 \) 
    \item Initialize a graph with five inventors:
    \begin{itemize}
        \item \( v_1 \) connected to \( v_2 \) and \( v_3 \)
        \item \( v_2 \) connected to \( v_3 \)
        \item \( v_3 \) connected to \( v_4 \) and \( v_5 \)
        \item \( v_4 \) connected to \( v_5 \)
    \end{itemize}
    \item Include a facilitator player \texttt{tg} with \( k = 1 \).
    \item Run the simulation for 10 stages.
    \item In one simulation, let \texttt{tg} choose an edge randomly.
    \item In another simulation, let \texttt{tg} always add the edge \( (v_1, v_4) \).
    \item Compute and plot:
    \begin{itemize}
        \item The cumulative reward of \texttt{tg} after each stage.
        \item The knowledge growth of each inventor after each stage.
        \item The Shapley value \( \phi_i \) of each inventor after each stage.

    \end{itemize}
\end{itemize}

\textbf{Simulation Task Two:}
\begin{itemize}
    \item In this simulation the number of graphs is \( 1 \) 
    \item Initialize a graph with ten inventors:
    \begin{itemize}
        \item \( v_1 \) to \( v_5 \) connected to each other.
        \item \( v_5 \) to \( v_{10} \) connected to each other.
    \end{itemize}
    \item Include a facilitator player \texttt{tg} with \( k = 5 \).
    \item Run the simulation for 10 stages.
    \item In one simulation, let \texttt{tg} choose to connect 5 random pairs of currently unconnected vertices (e.g., \( v_1 \) with \( v_{10} \)).
    \item In another simulation, let \texttt{tg} always connect \( v_4 \) with \( v_6, v_7, v_8, v_9, v_{10} \).
    \item Compute and plot:
    \begin{itemize}
        \item The cumulative reward of \texttt{tg} after each stage.
        \item The knowledge growth of each inventor after each stage.
        \item The Shapley value \( \phi_i \) of each inventor after each stage.

    \end{itemize}
\end{itemize}

\textbf{Simulation Task Three:}

\begin{itemize}
    \item In this simulation the number of graphs is \( 2 \) 
    \begin{itemize}
        \item Initialize \( G_1 \) with ten inventors:
        \begin{itemize}
            \item \( v^1_1 \) to \( v^1_5 \) connected to each other.
            \item \( v^1_5 \) to \( v^1_{10} \) connected to each other.
        \end{itemize}
    \end{itemize}
    \begin{itemize}
        \item Initialize \( G_2 \) with ten inventors:
        \begin{itemize}
            \item \( v^2_1 \) to \( v^2_5 \) connected to each other.
            \item \( v^2_5 \) to \( v^2_{10} \) connected to each other.
        \end{itemize}
    \end{itemize}            
    \item Include a facilitator player \texttt{\(tg_1\)} with \( k = 5 \).  Let \texttt{\(tg_1\)} choose to connect 5 random pairs of currently unconnected vertices (e.g., \( v^1_1 \) with \( v^1_{10} \)).
    \item Include a facilitator, \texttt{\(tg_2\)} with \( k = 1 \).  Let \texttt{\(tg_2\)} choose to connect 1 random pairs of currently unconnected vertices (e.g., \( v^1_1 \) with \( v^1_{10} \)).
    \item Compute and plot:
    \begin{itemize}
        \item The cumulative reward of \texttt{\(tg_1\)} after each stage.
        \item The cumulative reward of \texttt{\(tg_2\)} after each stage.
        \item The knowledge growth of each inventor after each stage.

    \end{itemize}
    \item Run the simulation for 10 stages.

\end{itemize}

Each plot should be clearly identified clearly associated with the simulation task and strategy used by the $tg_i$s.  In addition the plots should be both written to a file and standard output.

\
When there is more than one graph, create a table that represents what percentage of an inventor knowledge was created in each of the organizations. 

\end{humanprompt}

%====More than one organization - adding back simulation tasks 

%====================end of more than one organization

%================adding a correctness claim

\section{Adding a correctness claim}

In this section we add a  simulation specification correctness claim.  We then attempt to integrate it with the current simulation implementation. 

\subsection{Correctness claim}

\begin{humanprompt}

The claim below is true after each stage of the simulation.  Assume that in a stage of the simulation the graph $G$ was chosen.  In addition, assume that the clique $C$ was chosen.
The $tg$ associated with this stage either added or did not edges connecting vertices in $C$.  

We will have a counter, $noTGKnowldge$.  In the beginning of the simulation $noTGKnowldge$ is set to $0$.

Next, if $C$ produces a new knowledge item, there are two cases to consider. 

\begin{enumerate}
    \item There are no edges that connect vertices in $C$ and where added by $tp$.
    In that case $noTGKnowldge$ stays the same.
    \item There are edges that connect vertices in $C$ and where added by $tg$.  In that case $noTGKnowldge$ is increment by $1$.  
\end{enumerate}

The second case is that $C$ does not produce a new knowledge item.  In that case  $noTGKnowldge$  is unchanged. 

We denote by $r(tg)$ the overall reword obtained by $tg$.  $r(tg)$ is initialized to $0$ in the beginning of the simulation and is updated after each stage of the simulation.  It is incremented by one if $tg$ gained a reward in the stage.

As previously defined, $B_{v}$ is the set of knowledge items the inventor $v$ has.  It is initialized to the empty set and is updated in each stage of the simulation.  

\begin{lemma}
In the initial stage of the system and after each stage of the simulation we have that 

$|\bigcup B_v| = noTGKnowldge + \sum r(tg)$

\end{lemma}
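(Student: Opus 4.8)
The plan is to prove the identity as an invariant maintained across stages, by induction on the number $t \ge 0$ of completed stages. Let $U_t = \bigcup_v B_v$ be the pool of knowledge items held by some inventor after stage $t$, with the union taken over every inventor that has ever participated (so that a departure or a move of an inventor never deletes an item from $U_t$); let $a_t$ denote the value of $noTGKnowldge$ and $b_t = \sum r(tg)$ the total facilitator reward after stage $t$. The target is $|U_t| = a_t + b_t$ for all $t \ge 0$: the case $t = 0$ is the ``initial stage'' assertion and the cases $t \ge 1$ are the ``after each stage'' assertion.

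The base case is immediate: before any stage runs every $B_v$ is empty, so $|U_0| = 0$, while both counters are initialized to $0$, giving $|U_0| = a_0 + b_0$. For the inductive step the fact I would isolate first is that a single stage creates at most one knowledge item and that any item it creates is genuinely new — not contained in any $B_v$ beforehand — so that $|U_t| - |U_{t-1}|$ is $1$ when a new item is created in stage $t$ and $0$ otherwise; distributing that item to all members of the selected clique $C$, and any inventor leaving or relocating, change nothing else in $|U_t|$, and because only the selected graph's facilitator can be rewarded, at most one summand of $b$ can change in a stage.

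Assuming $|U_{t-1}| = a_{t-1} + b_{t-1}$, I would then do a three-way case split on stage $t$ according to the counter-update rules. If no new item is created, then $a_t = a_{t-1}$ and $b_t = b_{t-1}$ (a reward is earned only when a clique actually produces an item) and $|U_t| = |U_{t-1}|$. If a new item is created and $C$ contains no $tg$-added edge, then $noTGKnowldge$ is incremented and no reward is earned, so $a_t = a_{t-1} + 1$, $b_t = b_{t-1}$, and $|U_t| = |U_{t-1}| + 1$. If a new item is created and $C$ contains a $tg$-added edge, then $tg$ is rewarded while $noTGKnowldge$ is unchanged, so $a_t = a_{t-1}$, $b_t = b_{t-1} + 1$, and $|U_t| = |U_{t-1}| + 1$. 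In every case the two sides of the identity change by the same amount, so $|U_t| = a_t + b_t$, and the induction closes.

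The main obstacle is not the arithmetic but pinning the bookkeeping down tightly enough that these three cases genuinely partition the possibilities and move both sides in lockstep. Three semantic points carry the argument: each newly created knowledge item is fresh, so the pool grows by exactly one and never by zero or two; $\sum r(tg)$ rises exactly in the third case and in no other — in particular a facilitator is \emph{not} rewarded merely because the chosen clique contains one of its temporary edges when no item is produced; and $U_t$ never loses an element when an inventor departs or relocates, because knowledge persists with the inventor globally. Were a reward instead granted in every stage whose clique carried a $tg$-edge, or were a departing inventor's items purged from the pool, the identity would break; so the real content of the proof is verifying that the simulation uses the creation-linked reward rule and the persistent-knowledge convention assumed above.
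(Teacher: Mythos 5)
Your proof is correct and follows essentially the same route as the paper's: induction on the number of completed stages, with a base case of empty knowledge sets and zero counters, and a case split on whether an item is created and whether the selected clique contains a facilitator-added edge, showing both sides of the identity change in lockstep. Your additional care in pinning down the freshness of each created item, the creation-linked reward rule, and the persistence of knowledge under inventor movement makes explicit some assumptions the paper's proof uses implicitly, but the argument is the same.
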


\end{humanprompt}

%==== correctness proof
\subsection{Correctness proof}

\begin{humanprompt}
\textbf{I prompted the AI as follows.   
Here is a specification of simulation with a correctness claim. Please write a proof by induction on each stage of the claim. 
}
\end{humanprompt}

I also provided the current specification and the above claim.  I got the following updated claim and proof.

\begin{aireply}

\begin{lemma}[Correctness Invariant]
\label{lemma:correctness}
Let $B_v$ denote the knowledge set of inventor $v$.
After every stage of the simulation, we have
\[
\left| \bigcup_v B_v \right|
\;=\;
noTGKnowldge \;+\; \sum_{\text{all facilitators } tg} r(tg),
\]
where:
\begin{itemize}
    \item $noTGKnowldge$ counts the number of knowledge items created in stages where the selected clique contained \emph{no} temporary edge added by any facilitator.
    \item For each facilitator $tg$, the value $r(tg)$ counts the number of knowledge items created in stages where the selected clique contained at least one temporary edge added by that $tg$.
\end{itemize}
Hence, each created knowledge item is counted exactly once by exactly one of the two terms.
\end{lemma}

\begin{proof}
We proceed by induction on the number of completed stages.

\textbf{Base Case ($T = 0$).}
At the start of the simulation, no stages have been executed.  
All inventors have empty knowledge sets, hence
\[
\left|\bigcup_v B_v\right| = 0.
\]
By initialization, $noTGKnowldge = 0$ and $r(tg) = 0$ for all facilitators.
Therefore the invariant holds:
\[
0 = 0 + 0.
\]

\textbf{Inductive Step.}
Assume that after $T$ stages the equality holds:
\[
\left|\bigcup_v B_v\right| = noTGKnowldge + \sum_{tg} r(tg).
\]
We show that it remains true after stage $T{+}1$.

During stage $T{+}1$, one organization $G$ is selected, and a clique $C \subseteq G$ is chosen.
At most one new knowledge item $x$ may be created in this stage.

\medskip
\noindent
\emph{Case 1: No new knowledge item is created.}
Then no inventor's knowledge set changes, and the counters remain unchanged:
\[
\left|\bigcup_v B_v\right|, \; noTGKnowldge, \; \text{and each } r(tg)
\text{ all remain the same.}
\]
Thus the equality still holds after this stage.

\medskip
\noindent
\emph{Case 2: A new knowledge item is created.}
Then exactly one of the following subcases applies.

\begin{itemize}
    \item \textbf{Subcase (a):} 
    The chosen clique $C$ contains \emph{no} temporary edge added by any facilitator.
    Then the new item is counted as a ``non-tg'' item.
    Hence:
    \[
    \begin{aligned}
    \left|\bigcup_v B_v\right| &\gets \left|\bigcup_v B_v\right| + 1,\\
    noTGKnowldge &\gets noTGKnowldge + 1,\\
    r(tg) &\text{unchanged for all } tg.
    \end{aligned}
    \]
    Using the inductive hypothesis, we obtain:
    \[
    \left|\bigcup_v B_v\right|_{\!T{+}1} 
    = \left|\bigcup_v B_v\right|_{\!T} + 1
    = (noTGKnowldge + \sum r(tg)) + 1
    = (noTGKnowldge + 1) + \sum r(tg),
    \]
    which is exactly the required invariant.

    \item \textbf{Subcase (b):} 
    The chosen clique $C$ contains at least one temporary edge added by some facilitator $tg^\ast$.
    Then the new item is classified as a ``tg-involved'' item and increments $r(tg^\ast)$:
    \[
    \begin{aligned}
    \left|\bigcup_v B_v\right| &\gets \left|\bigcup_v B_v\right| + 1,\\
    noTGKnowldge &\text{unchanged},\\
    r(tg^\ast) &\gets r(tg^\ast) + 1.
    \end{aligned}
    \]
    Again, by the inductive hypothesis,
    \[
    \left|\bigcup_v B_v\right|_{\!T{+}1} 
    = \left|\bigcup_v B_v\right|_{\!T} + 1
    = (noTGKnowldge + \sum r(tg)) + 1
    = noTGKnowldge + \sum r(tg) + 1
    = noTGKnowldge_{\!T{+}1} + \sum r(tg)_{\!T{+}1}.
    \]
\end{itemize}

In all possible subcases, the invariant holds after stage $T{+}1$.

\medskip
\noindent
\textbf{Conclusion.}
By induction on the number of stages, the equality
\[
\left|\bigcup_v B_v\right| = noTGKnowldge + \sum_{tg} r(tg)
\]
holds after every stage of the simulation.
\end{proof}

\end{aireply}

\subsection{Proof review} 

I reviewed the proof above and it looks correct.  Next, I want to add an invariant to the implementation to check that the above lemma holds in the implementation, but first I asked the AI to suggest invariants.  Interestingly, the claim I defined above was not suggested. 

%============AI suggested invariants

\subsection{Asking the AI to suggest invariants}

\begin{humanprompt}

\paragraph{Prompt.}
Please review the following specification and suggest invariants that apply to the specification and are critical for the correctness of the specification. (I provided the current specification above without my claim or the tasks (see \ref{twoOrgSpec}).)

\end{humanprompt}

\begin{aireply}

\paragraph{Response.}
Here are some critical invariants that should hold throughout the execution of the specification to ensure correctness and consistency:

\begin{enumerate}
    \item \textbf{Inventor Identity and Knowledge Persistence}
    \begin{itemize}
        \item \textbf{Invariant:} For every inventor \(i\), the global identifier remains constant across all organizations and stages:
        \[
        \forall i,\; \text{ID}(v_i^{(l)}) = i \text{ for any } l.
        \]
        \item \textbf{Invariant:} The knowledge set \(B_i\) of inventor \(i\) persists globally and is never lost or duplicated:
        \[
        \forall i,\; B_i \text{ is unique and consistent across organizations.}
        \]
    \end{itemize}

    \item \textbf{Organization Graph Integrity}
    \begin{itemize}
        \item \textbf{Invariant:} Each organization graph \(G_l = (N_l, E_l)\) satisfies:
        \[
        N_l \cap N_r = \emptyset \text{ for } l \neq r.
        \]
        \item \textbf{Invariant:} Temporary edges added by facilitator \(\texttt{tg}_l\) are removed at the end of the stage:
        \[
        E_l^{\text{temp}} = \emptyset \text{ after each stage.}
        \]
    \end{itemize}

    \item \textbf{Knowledge Attribution}
    \begin{itemize}
        \item \textbf{Invariant:} For every knowledge item \(x\), its origin is immutable:
        \[
        \text{origin}(x) \text{ is assigned once and never changes.}
        \]
    \end{itemize}

    \item \textbf{Probability Computations}
    \begin{itemize}
        \item \textbf{Invariant:} Leaving probability and new knowledge probability are always in \([0,1]\):
        \[
        0 \le P_{\text{leave}}, P_{\text{new}} \le 1.
        \]
        \item \textbf{Invariant:} If \(|K_l| = 0\), then \(P_{\text{leave}} = 0\) and \(P_{\text{new}} = 1\).
    \end{itemize}

    \item \textbf{Movement Consistency}
    \begin{itemize}
        \item \textbf{Invariant:} When an inventor moves:
        \begin{enumerate}
            \item They are removed from the source organization and added to the destination with no edges.
            \item Their knowledge set \(B_i\) remains unchanged.
            \item No duplicate inventor nodes are created.
        \end{enumerate}
    \end{itemize}

    \item \textbf{Clique Selection}
    \begin{itemize}
        \item \textbf{Invariant:} The selected clique \(C\) is a valid clique in \(G_l\) considering both permanent and temporary edges.
    \end{itemize}

    \item \textbf{Shapley Value Calculation}
    \begin{itemize}
        \item \textbf{Invariant:} For each inventor \(v_i^{(l)}\),
        \[
        \phi_i^{(l)} = \sum_{x \in B_i} \frac{1}{|\{v_j^{(l)} \in N_l : x \in B_j\}|}
        \]
        must be computed using the current organization membership only.
    \end{itemize}

    \item \textbf{Global Statistics}
    \begin{itemize}
        \item \textbf{Invariant:} Global statistics reflect cumulative knowledge and movements without contradiction (e.g., no negative counts, no missing transitions).
    \end{itemize}
\end{enumerate}

\end{aireply}

\subsection{Comments on AI suggested invariants}

Notably the invariant I defined was not suggested by the AI.  In addition, some of suggested invariants are more concrete and can be easily be proven while others are more abstract written in pure English with no logical formula associated with them. 

%============end AI suggested invariatns 

%===end correcntess proof

%=======end adding a correctness calim

\subsection{Implementation with the invariant}

Created a clean specification with my invariant (invariant - \ref{lemma:correctness} specification -  \ref{promptGeneratingInvariant} and generated the code \href{https://colab.research.google.com/drive/1fEnkwHwHmAf5N6ItGj40mLMqeAYRLwqn#scrollTo=iGtMLsiLGwRf\&line=637&uniqifier=1}{link}. The code is now generating the three experiments with the two $tg$ strategies and is checking the invariant after each stage of the simulation.  The execution completes and the invariant is correct after each stage of the simulation. 

%====================Comparing two tg strategies with k edges

\section{Comparing two tg strategies that use $k$ edges}
\label{comparing_strategies}

Next I attempt to compare two tg strategies that use $k > 1$ edges as detailed in the prompt below.  

\begin{humanprompt}

\textbf{Please generate a python program that implements the following simulation specification and its associated invariant.}\\

\subsection{Comparing two tg strategies that use $k$ edges}
\label{twoOrgSpec}

This subsection defines a simulation model for knowledge creation and diffusion across multiple organizations. Each organization is represented as a graph, and inventors may move between organizations over time. The goal is to study how network structure, mobility, and facilitator interventions influence knowledge growth and individual contribution.

\subsubsection*{Model Entities and Notation}

\paragraph{Organizations and Inventors.}
There are \(m \ge 1\) organizations, each represented as a graph \(G_l = (N_l, E_l)\) for \(l = 1, \ldots, m\). An inventor is an individual with a \textbf{global unique identifier}. If inventor \(i\) is currently in organization \(l\), we denote this inventor by \(v_i^{(l)}\). The subscript \(i\) never changes, even if the inventor moves between organizations; the superscript indicates the current organization membership.

\paragraph{Knowledge Sets.}
Each inventor \(v_i^{(l)}\) owns a set of knowledge items \(B_i\) (which persists with the inventor globally, regardless of organization). The set of knowledge in organization \(G_l\) at a given time is defined as
\[
K_l = \bigcup_{v_i^{(l)} \in N_l} B_i .
\]

\paragraph{Knowledge Attribution.}
Each knowledge item is attributed to the organization in which it was first created. If knowledge item \(x\) was generated in organization \(G_l\), its origin is recorded as \(\text{origin}(x) = l\). This information will later be used to compute, for each inventor, the percentage of their knowledge originating from each organization.

\paragraph{Facilitators.}
Each organization \(G_l\) has an associated facilitator denoted \texttt{tg\(_l\)}. Facilitator \texttt{tg\(_l\)} may add temporary edges to \(G_l\) only; it cannot modify other organizations.

\paragraph{Global Stage.}
Time evolves in discrete \textbf{global stages}. In each stage, exactly one organization is selected uniformly at random to execute the stage operations. Some operations (e.g., inventor movement) may affect multiple organizations.

\subsubsection*{Stage Operations}

Each global stage consists of the following sequence:

\begin{enumerate}
    \item \textbf{Organization Selection.}  
    Select an organization \(G_l\) uniformly at random from \(\{G_1, \ldots, G_m\}\). All subsequent steps in the stage apply with respect to \(G_l\).

    \item \textbf{Leaving-Organization Step (with probability \(p = 0.2\)).}
    This step is executed only if \(m > 1\).
    \begin{enumerate}
        \item Identify inventor(s) in \(G_l\) with maximal knowledge size:
        \[
        M_l = \left\{ v_i^{(l)} \in N_l \mid |B_i| = \max_{v_j^{(l)} \in N_l} |B_j| \right\}.
        \]
        \item If \(|M_l| > 1\), select one inventor uniformly at random from \(M_l\); denote this inventor by \(v_i^{(l)}\).
        \item Compute the probability of leaving:
        \[
        P_{\text{leave}} = 
        \begin{cases}
        \dfrac{|B_i|}{|K_l|}, & \text{if } |K_l| > 0, \\
        0, & \text{if } |K_l| = 0 .
        \end{cases}
        \]
        \item With probability \(P_{\text{leave}}\), the inventor leaves \(G_l\):
        \begin{itemize}
            \item Remove \(v_i^{(l)}\) and all incident edges from \(G_l\):  
            \( N_l \leftarrow N_l \setminus \{v_i^{(l)}\} \).
            \item Select an organization \(G_r\), \(r \ne l\), uniformly at random.
            \item Add the inventor to \(G_r\) with no edges:  
            \( N_r \leftarrow N_r \cup \{v_i^{(r)}\} \).  
            The inventor retains the same ID \(i\) and knowledge set \(B_i\).
            \item If an inventor returns to a previously visited organization, they rejoin with no pre-existing edges.
        \end{itemize}
        \item The movement event (departure and destination) is recorded for output and plotting.
    \end{enumerate}

    \item \textbf{Facilitator Temporary Edge Addition.}
    Facilitator \texttt{tg\(_l\)} may add up to \(k_l\) temporary edges to \(G_l\), where \(k_l \ge 0\) is a parameter of the simulation. All temporary edges are removed at the end of the stage.

    \item \textbf{Clique Selection and Knowledge Creation.}
    \begin{enumerate}
        \item A clique \(C \subseteq N_l\) is selected uniformly at random from all cliques in \(G_l\), considering both permanent and temporary edges.
        \item A new knowledge item is created for clique \(C\) with probability:
        \[
        P_{\text{new}} = 
        \begin{cases}
        \dfrac{\left|\bigcup_{v_i^{(l)} \in C} B_i\right|}{|K_l|}, & \text{if } |K_l| > 0, \\
        1, & \text{if } |K_l| = 0 .
        \end{cases}
        \]
        \item If a new item \(x\) is created:
        \begin{itemize}
            \item Add \(x\) to each \(B_i\) for all \(v_i^{(l)} \in C\).
            \item Record \(\text{origin}(x) = l\).
        \end{itemize}
    \end{enumerate}

    \item \textbf{Facilitator Reward.}
    Facilitator \texttt{tg\(_l\)} receives a reward of 1 for the stage if the selected clique contains at least one temporary edge added by \texttt{tg\(_l\)}.

    \item \textbf{Shapley Value Computation.}
    For organization \(G_l\), define the cooperative game
    \[
    (N_l, v), \qquad v(S) = \left|\bigcup_{v_i^{(l)} \in S} B_i\right| \quad \forall\, S \subseteq N_l.
    \]
    The Shapley value of an inventor \(v_i^{(l)}\) is computed as:
    \[
    \phi_i^{(l)} = \sum_{x \in B_i} \frac{1}{|\{v_j^{(l)} \in N_l : x \in B_j\}|}.
    \]

    \item \textbf{Global Statistics Update.}
    After every stage:
    \begin{itemize}
        \item Global cumulative statistics for all inventors are updated.
        \item Plots for inventors that changed organizations show continuous curves, with color changes marking organization transitions.
    \end{itemize}
\end{enumerate}

\subsubsection*{Invariant}

Implement the following invariant in the code so that after each simulation stage the invariant expressed in the lemma below will be checked and a log will be produced to a log file that details the sequence number of the simulation, the organization that participated, the coalition that was chosen, if it had a tg edge added, and wether or not the invariant below held.  

%============
\begin{lemma}[Correctness Invariant]
\label{lemma:correctness}
Let $B_v$ denote the knowledge set of inventor $v$.
After every stage of the simulation, we have
\[
\left| \bigcup_v B_v \right|
\;=\;
noTGKnowldge \;+\; \sum_{\text{all facilitators } tg} r(tg),
\]
where:
\begin{itemize}
    \item $noTGKnowldge$ counts the number of knowledge items created in stages where the selected clique contained \emph{no} temporary edge added by any facilitator.
    \item For each facilitator $tg$, the value $r(tg)$ counts the number of knowledge items created in stages where the selected clique contained at least one temporary edge added by that $tg$.
\end{itemize}
Hence, each created knowledge item is counted exactly once by exactly one of the two terms.
\end{lemma}

%=============

\textbf{Simulation Task One:}
\begin{itemize}
    \item In this simulation the number of graphs is \( 1 \) 
    \item Initialize a graph with five inventors:
    \begin{itemize}
        \item \( v_1 \) connected to \( v_2 \) and \( v_3 \)
        \item \( v_2 \) connected to \( v_3 \)
        \item \( v_3 \) connected to \( v_4 \) and \( v_5 \)
        \item \( v_4 \) connected to \( v_5 \)
    \end{itemize}
    \item Include a facilitator player \texttt{tg} with \( k = 2 \).
    \item Run the simulation for 10 stages.
    \item In one simulation, let \texttt{tg} choose two edge randomly.
    \item In another simulation, let \texttt{tg} always add the edge \( (v_1, v_4), (v_1, v_5) \).
    \item Compute and plot:
    \begin{itemize}
        \item The cumulative reward of \texttt{tg} after each stage.
        \item The cumulative growth of knowledge in the organization after each stage.
        \item A single bar that represents the entire knowledge in the organization and the percentage of the knowledge that was created without the help of the \texttt{tg}, represented by the  $noTGKnowldge$ variable, and the percentage of the knowledge that was created with the help of the \texttt{tg}.  To that end the formula of the invariant lemma \ref{lemma:correctness} is used.  

    \end{itemize}
    \item Determine which $tg$ strategy created more knowledge and which $tg$ strategy gave more reward to the $tg$. 

\end{itemize}

Each plot should be clearly identified clearly associated with the simulation task and strategy used by the $tg_i$s.  In addition the plots should be both written to a file and standard output.  

\textbf{Simulation Task Two:}
\begin{itemize}
    \item In this simulation the number of graphs is \( 1 \) 
    \item Initialize a graph with ten inventors:
    \begin{itemize}
        \item \( v_1 \) to \( v_5 \) connected to each other.
        \item \( v_5 \) to \( v_{10} \) connected to each other.
    \end{itemize}
    \item Include a facilitator player \texttt{tg} with \( k = 5 \).
    \item Run the simulation for 10 stages.
    \item In one simulation, let \texttt{tg} choose to connect 5 random pairs of currently unconnected vertices (e.g., \( v_1 \) with \( v_{10} \)).
    \item In another simulation, let \texttt{tg} always connect \( v_4 \) with \( v_6, v_7, v_8, v_9, v_{10} \).
    \item Compute and plot:
    \begin{itemize}
        \item The cumulative reward of \texttt{tg} after each stage.
        \item The cumulative knowledge growth in the organzation after each stage.
        \item A single bar that represents the entire knowledge in the organization and the percentage of the knowledge that was created without the help of the \texttt{tg}, represented by the  $noTGKnowldge$ variable, and the percentage of the knowledge that was created with the help of the \texttt{tg}.  To that end the formula of the invariant lemma \ref{lemma:correctness} is used.  

    \end{itemize}
    \item Determine which $tg$ strategy created more knowledge and which $tg$ strategy gave more reward to the $tg$. 
\end{itemize}

\textbf{Prompt emphasis}
I'm missing the conclusions in each simulation task - which \texttt{tg} strategy created more knowledge? Which \texttt{tg} strategy created more reward to the \texttt{tg}? Please render the code again with the conclusions at the end of the simulation.

\textbf{Prompt emphasis}
I'm missing which strategy gave the \texttt{tg} more reward.  Please render the code again with the conclusions at the end of the simulation.

\textbf{Prompt - following on AI suggestion}
Please Add Monte-Carlo averaging so conclusions are statistically stable

\
%When there is more than one graph, create a table that represents what percentage of an inventor knowledge was created in each of the organizations. 

%=============== END prompt for enerating a simulation with an invariant 

\end{humanprompt}

\subsection{Generated simulation}

Obtained correct simulation in two stages, \href{https://colab.research.google.com/drive/1JQ_SYM3DdBjVKXXYQqewHwbaWsxR3lz4#scrollTo=Xs2Qcd24q-nn&line=1&uniqifier=1}{one with} and \href{https://colab.research.google.com/drive/1JQ_SYM3DdBjVKXXYQqewHwbaWsxR3lz4#scrollTo=7uU-RXw-phci&line=3&uniqifier=1}{one without} Monte Carlo simulation that ensures stability of the result.  Had to emphasis that I want a final result concluded by the simulation several times - see prompt above.  Also followed the suggestion from AI to implement
the Monte Carlo simulation to ensure statistical stability.

% If you want to push this further (very natural for a paper):
% Add confidence intervals / error bars
% Perform a paired t-test between strategies
% Export Monte-Carlo results to LaTeX tables
% Run sensitivity analysis on k, graph size, or number of stages

%==================Comparing two tg strategies with k edges

\section{Ambiguous business specification}
\label{sec:ambiguous_business_specification}

\begin{itemize}
    \item We assume that the business analyst wants to completely control the business logic.
    \item To simulate the undesired situation in which the business knowledge is ambiguous and determine the effect, we use a summary of the original specification.
    \item We observe that different aspects of the simulation are not as desired, in particular the notion of organizations and graph structure, the association of tg to an organization, and the order of events.
\end{itemize}

Our setup is the RooCode framework and extension for the VSCode IDE, which enables us to interact with LLMs (in a BYOM manner) with seamless integration with the codebase. The model of our choice for this experiment is \emph{Claude-Sonnet-4.5}.

Firstly, I begin with the original, refined specification from Section~\ref{comparing_strategies} (referred to below as the file \verb|humanprompt.tex|). In this specification, we fully specify the business logic that we want to control. This is the input that we send to the model, which we care about.   It simulates a situation in a commercial application in which the business analyst determines the way some business transaction is done, e.g., an insurance claim, and wants to completely control the manner in which the transaction is carried out.  The rest of the program, mainly the user interface, is less critical, and we therefore give the model greater freedom. Later, we will use a summary and obtain an ambiguous version of the business logic specification, to simulate what happens if we do not completely specify the business logic that we wish to control.

\begin{humanprompt}
\paragraph{Prompt.}
My task for you is written in latex format in the file \verb|humanprompt.tex|.
You may use intermediate (Markdown) files for problem specification, design overview, task division, and you may ask me for my intervention at any time.
Implement the code using Python. For the UI, you may use simple HTML and CSS, and JS when necessary, but never node.js.
Be clear in your intentions and record all decisions you take in (one or more) files.
Make sure the result is actionable and verifiable by running and looking at the UI or logs.

\end{humanprompt}

Notice the sentence "You may use intermediate (Markdown) files for problem specification, design overview, task division." In other words, I permit the model to use notions coming from Spec-Driven Development (SDD) to guide its own behavior, by first formalizing the problem in intermediate files. Not only was it shown that it improves code generation, it also gives transparency into the decisions that the model has taken with regard to peripheral elements that surround the business logic.

The model successfully created a web-based application with basic functionality, mainly demonstrating the results of the three simulations, while maintaining the invariant during all steps.

\begin{aireply}

\paragraph{Response.}
Implemented the full simulation project per the LaTeX specification.

\paragraph{Created files}

\begin{itemize}
    \item \verb|design.md| – detailed design documentation.
    \item \verb|requirements.txt| – Python dependencies (\verb|networkx|, \verb|matplotlib|).
    \item \verb|simulation.py| – complete Python simulation with graph handling, facilitator logic, invariant checking, logging, and plotting for all four scenarios.
    \item \verb|simulation.log| – log file with per‑stage information and invariant verification.
    \item \verb|reward_plot.png|, \verb|knowledge_plot.png|, \verb|knowledge_split.png| – generated plots for cumulative facilitator reward, cumulative knowledge size, and final knowledge split.
    \item \verb|index.html| – static HTML UI displaying the three plots and loading the log.
    \item \verb|styles.css| – basic styling for the UI.
\end{itemize}

\end{aireply}

The model has created a \verb|design.md| file to guide itself throughout code generation.

Next, we introduce ambiguity by two steps. The first step is transforming the carefully written math-rich specification into a pure natural language description, which is described below. The next step will be obtaining a summary from the natural language description.

We begin with asking the model's help with winding down the formalism in the specification.

\begin{humanprompt}
\paragraph{Prompt.}
Reconsidering my original intent written in LaTeX \verb|humanprompt.tex|, how would you rephrase it in natural language in a way that doesn't contain math symbols at all? something that one would say out loud while presenting "without slides". Write that in a new file called "humanprompt.nl.md"
\end{humanprompt}

\begin{aireply}
\paragraph{Response.}
Created \verb|humanprompt.nl.md| - a natural language version of the LaTeX specification that can be read aloud without mathematical notation.

\paragraph{Key transformations:}

\begin{itemize}
    \item Mathematical symbols replaced with plain English (e.g., $m \geq 1$ → "one or more organizations")
    \item Set notation converted to everyday language (e.g., "union of knowledge sets" → "combined knowledge of everyone")
    \item Formulas explained conceptually (e.g., probability formulas described as ratios in words)
    \item Graph theory terms simplified (e.g., "clique" explained as "complete subgroups where everyone is connected")
    \item Lemma and formal notation converted to plain statements
\end{itemize}

The document maintains all the technical details and requirements from the original LaTeX specification while being accessible for verbal presentation without slides or visual aids. It flows as a narrative explanation suitable for someone listening to understand the simulation model.

\end{aireply}

Indeed, the order of events in each stage turns out to be correct as follows.
\begin{enumerate}
    \item Pick an Organization
    \item Someone Might Leave
    \item Facilitator Adds Temporary Connections
    \item Creating New Knowledge
    \item Reward the Facilitator
    \item Calculate Contributions
    \item Update Statistics
\end{enumerate}

However, it is still too long for someone to write it on their first try. Thus I asked that it be made briefer:

\begin{humanprompt}

\paragraph{Prompt.}
Thanks. Now create a less structured description, and a briefer one, even if it loses some nuances. Make it at most three paragraphs. Save to "humanprompt.nl.txt"
\end{humanprompt}

\begin{aireply}

\paragraph{Response.}
Created \verb|humanprompt.nl.txt| - a brief, three-paragraph natural language summary of the simulation specification.

\paragraph{Content overview:}

\begin{itemize}
    \item Paragraph 1: Core mechanics - inventors in networks, knowledge creation through collaboration, mobility between organizations
    \item Paragraph 2: Facilitator role - temporary connections, rewards, and the invariant check
    \item Paragraph 3: Experimental design - two scenarios, two strategies, comparative analysis The description sacrifices detailed specifications for readability and brevity, focusing on the essential concepts someone would need to understand what the simulation does at a high level.
\end{itemize}
\end{aireply}

Here is the content of \verb|humanprompt.nl.txt|:

\begin{aireply}

\paragraph{Content of humanprompt.nl.txt:}

We're simulating how knowledge gets created when inventors collaborate in a network. Each inventor has their own collection of knowledge, and they're connected to other inventors through a social network graph. At each time step, we pick a random group of fully-connected inventors (a clique) and they might create new knowledge together - the probability depends on how much combined knowledge they already have versus the total in the organization. Sometimes inventors with lots of knowledge might leave and join a different organization, losing their network connections but keeping their knowledge.

There's also a facilitator player who can temporarily add new connections between inventors for one time step, trying to help create more knowledge. The facilitator gets rewarded when knowledge is created by a group that includes one of their temporary connections. We track how much knowledge was created with the facilitator's help versus without, and this should always add up to the total knowledge in the system - that's our invariant check.

We run two scenarios with different network structures and compare two facilitator strategies: one that randomly connects inventors, and one that always makes the same strategic connections. For each run we plot how the facilitator's reward grows over time, how total knowledge grows, and show the final split between facilitated versus organic knowledge creation. The goal is to see which strategy helps create more knowledge overall and which earns the facilitator more reward points.

\end{aireply}

Now I wanted to attempt the original intent once more, with this condensed prompt instead of the original math-rich one.

The remainder of the conversation happens in a new context window and a directory cleared from the previous attempt.
Here, too, I go with an SDD-inspired approach:

\begin{humanprompt}
You must NOT read any file ending with \verb|.tex| or \verb|.md|. Your task is to read \verb|humanprompt.nl.txt| and to generate a more rigorous and full description, one that can use math symbols and formulas to be consistent and unambiguous. Write it in latex, in a new file "convertedprompt.tex"
\end{humanprompt}

\begin{aireply}
\paragraph{Response.}
I've created \verb|convertedprompt.tex| with a rigorous mathematical specification of the knowledge creation simulation. The document includes:

\begin{enumerate}
    \item \textbf{System Model} - Formal definitions of the inventor network graph, knowledge states, clique selection, and knowledge creation probability
    \item \textbf{Inventor Mobility} - Mathematical description of the departure process
    \item \textbf{Facilitation Mechanism} - Precise definitions of facilitator actions, knowledge attribution, and reward function
    \item \textbf{System Invariant} - The accounting equation ensuring facilitated + organic knowledge equals total knowledge
    \item \textbf{Experimental Design} - Formal specification of network topologies, facilitator strategies (random vs. fixed), performance metrics, and objectives
\end{enumerate}

All concepts are expressed using mathematical notation with proper set theory, probability functions, and graph theory formalism to eliminate ambiguity.
\end{aireply}

Unfortunately, despite the model's optimism expressed in the final sentence, looking at the resulting specification several problems popped up:

\begin{enumerate}
    \item The notion of organizations almost disappeared.
    \item There were mistakes in the graph structure, such as a confusion between directed and undirected edges and the definition of a clique, which introduce ambiguity.
    \item The association of tg to an organization was not touched upon.
    \item The order of events was problematic, as the choosing of a clique happens \emph{before} the introduction of new, temporary edges by tg.
    \item The budget of tg was missing, it seems that tg has an infinite budget.
    \item The specificity of the scenarios was lost after the summary.
\end{enumerate}

Finally, I ask it to generate a similar web-based simulator for the specified business logic.

\begin{humanprompt}
\paragraph{Prompt.}
Write python code to implement simulations for the task described in \verb|convertedprompt.tex|. I want a basic html-css-js (vanilla) UI that shows the simulation result (how much knowledge there is at the end, split by knowledge that was faciliated by tg and knowledge that wasn't. also, at each step, it must be checked that the invariant holds, and a line to a log file must be added. then the log file must also be presented in the UI).
\end{humanprompt}

See a screenshot of the result in Figure~\ref{fig:knowledge_screenshot}.

\begin{figure}[h]
    \centering
    \includegraphics[width=0.75\linewidth]{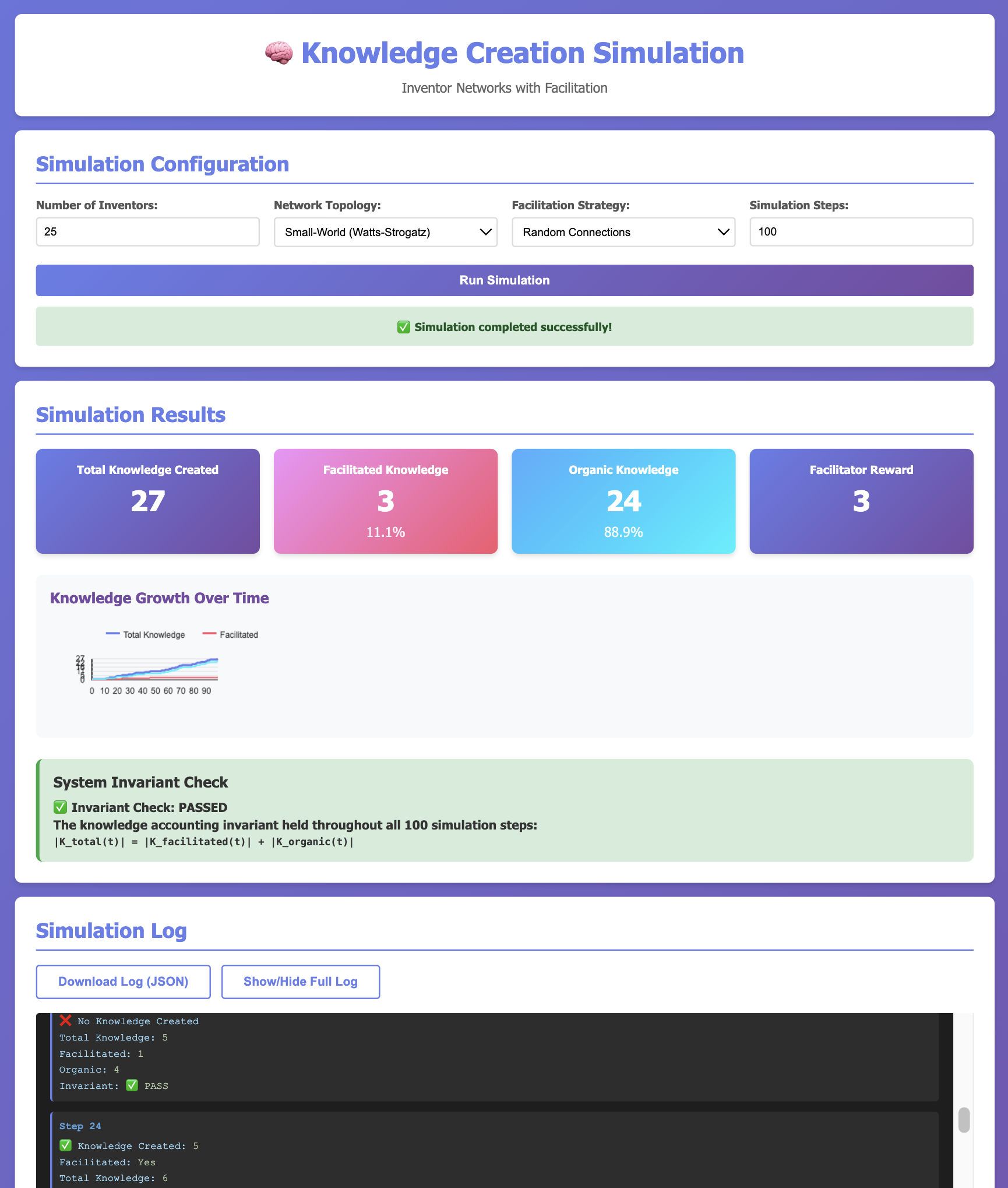}
    \caption{Knowledge Creation Simulation: screenshot of the result.}
    \label{fig:knowledge_screenshot}
\end{figure}

\paragraph{What we learned.}
The benefit from this experiment is twofold.

\paragraph{1. Controlled vs. not controlled aspects.}
A system analyst will want to separate between what the model can modify and what it must preserve. In other words, the art in the process is to know where and how to draw the line between controlled and not controlled program aspects. The specification of a GUI and other peripheral elements is not core logic and can be modified or expanded by the model without prompting the system analyst. However, anything pertaining to the business logic should be controlled by the analyst; it must be returned to the analyst for refinement or confirmation, especially in cases where the intention is ambiguous. This idea of controlled vs. not controlled aspects is captured in Figure~\ref{fig:control}.

\begin{figure}[h]
    \centering
    \includegraphics[width=0.6\linewidth]{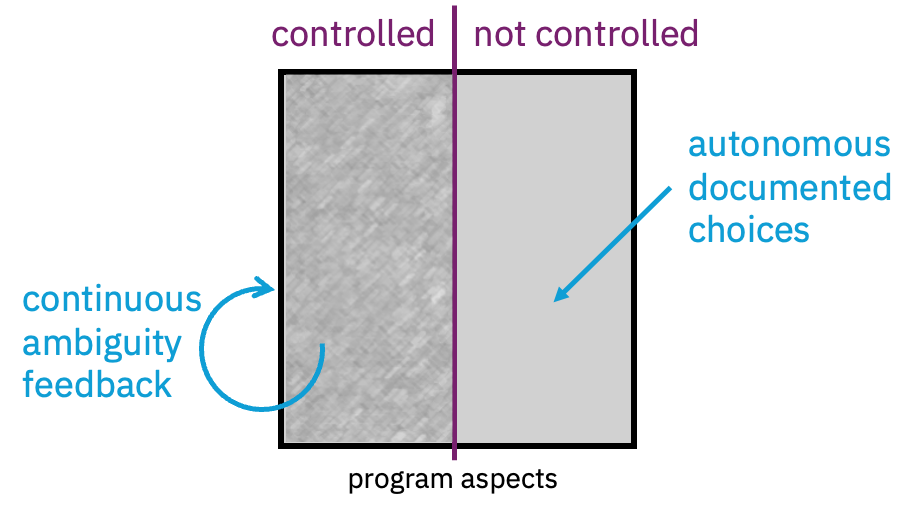}
    \caption{Controlled vs. not controlled program aspects.}
    \label{fig:control}
\end{figure}

\paragraph{2. Primacy of precision.}
Typically, the first iteration of a human-written description of a business logic is not precise. The line between controlled and not controlled variability of a program is blurry and not strictly stated. The summarized description generated above is a simulation of this: bad choice of words, confusing prepositions, and missing key information all contribute to that blurry line, and 
give the model space to be creative
in more aspects of the program than was originally intended by the system analyst.

\section{Related Work}

Recent advances in software engineering have explored the use of
Large Language Models (LLMs) and agentic systems to support program
synthesis and multi-step reasoning tasks. Agentic systems leverage
reasoning-action loops and multi-agent coordination to generate,
verify, and correct code autonomously, providing capabilities similar
to those described in Section~1, while remaining dependent on human
supervision for reliability. VIB (Value-Intention-Behavior) programming
offers a complementary perspective, structuring agent behavior around
high-level goals, intentions, and concrete actions \cite{ge2025surveyvibecodinglarge}, which
aligns well with the layered reasoning used in AI-assisted software generation.

Spec-Driven Development (SDD) provides a methodology for guiding both
human and AI developers from a structured specification to verified
code \cite{SDD2025}. By defining \emph{what} and \emph{why} first,
SDD improves alignment, traceability, and maintainability of
AI-generated outputs. Our work builds on this concept by using natural
language augmented with mathematical notation as an intermediate
representation. This approach retains the clarity and precision of
formal specifications while significantly reducing the notation
overhead compared to full formal languages such as \textsf{Z}.

The \textsf{Z} specification language remains a foundational formal
method for mathematically precise modeling of system state, operations,
and invariants \cite{Z1990}. Unlike \textsf{Z}, our intermediate
representation trades some formal rigor for readability and ease of use,
enabling AI agents to interpret and act on specifications directly
without requiring deep mathematical expertise. In this way, our
approach can be seen as a simplification of formal specification that
balances precision, expressiveness, and AI accessibility, bridging the
gap between rigorous formal methods and practical AI-assisted software
development.

\section{Conclusion}

This work illustrates that AI-assisted software development can
benefit from structured specifications to guide code generation and
support maintainable designs, though several limitations remain. By
combining natural language with lightweight mathematical notation, our
approach captures some of the advantages of formal specification—such
as explicit invariants and reasoning about correctness—while reducing
the complexity and overhead of full formal methods like \textsf{Z}.
The intermediate representation helped guide AI agents in generating
code, tests, and supporting infrastructure, drawing on principles from
Spec-Driven Development (SDD) and VIB programming to improve
alignment and traceability. However, the simulation used as a case
study, while logically complex, represents only a small piece of
software, and challenges such as attention limits, specification
decomposition, and iterative refinement were encountered during
development.

Future research directions include extending these methods to fully
exploit natural language specifications without embedded mathematics,
exploring richer abstractions, modular interfaces, and
document-driven composition to simplify system specification. Such
investigations could further reduce the barrier to entry for
formal-specification-inspired practices, broaden the applicability of
AI-assisted software engineering, and provide more scalable solutions
for larger and more complex software systems.

\section{Acknowledgments}

The author would like to thank Navid Asgari (Fordham University,
Gabelli School of Business) and Deepak Nayak (Ohio State University,
Fisher College of Business) for introducing and discussing foundational
ideas related to knowledge in organizations. Their insights and
conversations were instrumental in motivating the problem domain and
shaping the conceptual framing of the case study presented in this
paper, with \emph{tertius iungens} (the third that connects) playing a
key role in the simulation.

\bibliographystyle{plain}  % or another style like unsrt, alpha, abbrv
\bibliography{main}  % no .bib extension here

\appendix
\section*{Appendix}
\addcontentsline{toc}{section}{Appendix}

\renewcommand{\thesection}{\Alph{section}}

%\include{appendixMaterial}
%==================Prompt for generating a simulation with an invariant 
\section{Prompt for generating a simulation with an invariant}
\label{promptGeneratingInvariant}

In what follows I detail the prompt for a clean new generation of the simulation with its associated invariants. Following is hte prompt. 

\begin{humanprompt}

\textbf{Please generate a python program that implements the following simulation specification and its associated invariant.}\\

\subsection{More than one organization - adding back simulation tasks}
\label{twoOrgSpec}

This subsection defines a simulation model for knowledge creation and diffusion across multiple organizations. Each organization is represented as a graph, and inventors may move between organizations over time. The goal is to study how network structure, mobility, and facilitator interventions influence knowledge growth and individual contribution.

\subsubsection*{Model Entities and Notation}

\paragraph{Organizations and Inventors.}
There are \(m \ge 1\) organizations, each represented as a graph \(G_l = (N_l, E_l)\) for \(l = 1, \ldots, m\). An inventor is an individual with a \textbf{global unique identifier}. If inventor \(i\) is currently in organization \(l\), we denote this inventor by \(v_i^{(l)}\). The subscript \(i\) never changes, even if the inventor moves between organizations; the superscript indicates the current organization membership.

\paragraph{Knowledge Sets.}
Each inventor \(v_i^{(l)}\) owns a set of knowledge items \(B_i\) (which persists with the inventor globally, regardless of organization). The set of knowledge in organization \(G_l\) at a given time is defined as
\[
K_l = \bigcup_{v_i^{(l)} \in N_l} B_i .
\]

\paragraph{Knowledge Attribution.}
Each knowledge item is attributed to the organization in which it was first created. If knowledge item \(x\) was generated in organization \(G_l\), its origin is recorded as \(\text{origin}(x) = l\). This information will later be used to compute, for each inventor, the percentage of their knowledge originating from each organization.

\paragraph{Facilitators.}
Each organization \(G_l\) has an associated facilitator denoted \texttt{tg\(_l\)}. Facilitator \texttt{tg\(_l\)} may add temporary edges to \(G_l\) only; it cannot modify other organizations.

\paragraph{Global Stage.}
Time evolves in discrete \textbf{global stages}. In each stage, exactly one organization is selected uniformly at random to execute the stage operations. Some operations (e.g., inventor movement) may affect multiple organizations.

\subsubsection*{Stage Operations}

Each global stage consists of the following sequence:

\begin{enumerate}
    \item \textbf{Organization Selection.}  
    Select an organization \(G_l\) uniformly at random from \(\{G_1, \ldots, G_m\}\). All subsequent steps in the stage apply with respect to \(G_l\).

    \item \textbf{Leaving-Organization Step (with probability \(p = 0.2\)).}
    This step is executed only if \(m > 1\).
    \begin{enumerate}
        \item Identify inventor(s) in \(G_l\) with maximal knowledge size:
        \[
        M_l = \left\{ v_i^{(l)} \in N_l \mid |B_i| = \max_{v_j^{(l)} \in N_l} |B_j| \right\}.
        \]
        \item If \(|M_l| > 1\), select one inventor uniformly at random from \(M_l\); denote this inventor by \(v_i^{(l)}\).
        \item Compute the probability of leaving:
        \[
        P_{\text{leave}} = 
        \begin{cases}
        \dfrac{|B_i|}{|K_l|}, & \text{if } |K_l| > 0, \\
        0, & \text{if } |K_l| = 0 .
        \end{cases}
        \]
        \item With probability \(P_{\text{leave}}\), the inventor leaves \(G_l\):
        \begin{itemize}
            \item Remove \(v_i^{(l)}\) and all incident edges from \(G_l\):  
            \( N_l \leftarrow N_l \setminus \{v_i^{(l)}\} \).
            \item Select an organization \(G_r\), \(r \ne l\), uniformly at random.
            \item Add the inventor to \(G_r\) with no edges:  
            \( N_r \leftarrow N_r \cup \{v_i^{(r)}\} \).  
            The inventor retains the same ID \(i\) and knowledge set \(B_i\).
            \item If an inventor returns to a previously visited organization, they rejoin with no pre-existing edges.
        \end{itemize}
        \item The movement event (departure and destination) is recorded for output and plotting.
    \end{enumerate}

    \item \textbf{Facilitator Temporary Edge Addition.}
    Facilitator \texttt{tg\(_l\)} may add up to \(k_l\) temporary edges to \(G_l\), where \(k_l \ge 0\) is a parameter of the simulation. All temporary edges are removed at the end of the stage.

    \item \textbf{Clique Selection and Knowledge Creation.}
    \begin{enumerate}
        \item A clique \(C \subseteq N_l\) is selected uniformly at random from all cliques in \(G_l\), considering both permanent and temporary edges.
        \item A new knowledge item is created for clique \(C\) with probability:
        \[
        P_{\text{new}} = 
        \begin{cases}
        \dfrac{\left|\bigcup_{v_i^{(l)} \in C} B_i\right|}{|K_l|}, & \text{if } |K_l| > 0, \\
        1, & \text{if } |K_l| = 0 .
        \end{cases}
        \]
        \item If a new item \(x\) is created:
        \begin{itemize}
            \item Add \(x\) to each \(B_i\) for all \(v_i^{(l)} \in C\).
            \item Record \(\text{origin}(x) = l\).
        \end{itemize}
    \end{enumerate}

    \item \textbf{Facilitator Reward.}
    Facilitator \texttt{tg\(_l\)} receives a reward of 1 for the stage if the selected clique contains at least one temporary edge added by \texttt{tg\(_l\)}.

    \item \textbf{Shapley Value Computation.}
    For organization \(G_l\), define the cooperative game
    \[
    (N_l, v), \qquad v(S) = \left|\bigcup_{v_i^{(l)} \in S} B_i\right| \quad \forall\, S \subseteq N_l.
    \]
    The Shapley value of an inventor \(v_i^{(l)}\) is computed as:
    \[
    \phi_i^{(l)} = \sum_{x \in B_i} \frac{1}{|\{v_j^{(l)} \in N_l : x \in B_j\}|}.
    \]

    \item \textbf{Global Statistics Update.}
    After every stage:
    \begin{itemize}
        \item Global cumulative statistics for all inventors are updated.
        \item Plots for inventors that changed organizations show continuous curves, with color changes marking organization transitions.
    \end{itemize}
\end{enumerate}

\subsubsection*{Invariant}

Implement the following invariant in the code so that after each simulation stage the invariant expressed in the lemma below will be checked and a log will be produced to a log file that details the sequence number of the simulation, the organization that participated, the coalition that was chosen, if it had a tg edge added, and wether or not the invariant below held.  

%============
\begin{lemma}[Correctness Invariant]
\label{lemma:correctness}
Let $B_v$ denote the knowledge set of inventor $v$.
After every stage of the simulation, we have
\[
\left| \bigcup_v B_v \right|
\;=\;
noTGKnowldge \;+\; \sum_{\text{all facilitators } tg} r(tg),
\]
where:
\begin{itemize}
    \item $noTGKnowldge$ counts the number of knowledge items created in stages where the selected clique contained \emph{no} temporary edge added by any facilitator.
    \item For each facilitator $tg$, the value $r(tg)$ counts the number of knowledge items created in stages where the selected clique contained at least one temporary edge added by that $tg$.
\end{itemize}
Hence, each created knowledge item is counted exactly once by exactly one of the two terms.
\end{lemma}

%=============

\textbf{Simulation Task One:}
\begin{itemize}
    \item In this simulation the number of graphs is \( 1 \) 
    \item Initialize a graph with five inventors:
    \begin{itemize}
        \item \( v_1 \) connected to \( v_2 \) and \( v_3 \)
        \item \( v_2 \) connected to \( v_3 \)
        \item \( v_3 \) connected to \( v_4 \) and \( v_5 \)
        \item \( v_4 \) connected to \( v_5 \)
    \end{itemize}
    \item Include a facilitator player \texttt{tg} with \( k = 1 \).
    \item Run the simulation for 10 stages.
    \item In one simulation, let \texttt{tg} choose an edge randomly.
    \item In another simulation, let \texttt{tg} always add the edge \( (v_1, v_4) \).
    \item Compute and plot:
    \begin{itemize}
        \item The cumulative reward of \texttt{tg} after each stage.
        \item The knowledge growth of each inventor after each stage.
        \item The Shapley value \( \phi_i \) of each inventor after each stage.

    \end{itemize}
\end{itemize}

\textbf{Simulation Task Two:}
\begin{itemize}
    \item In this simulation the number of graphs is \( 1 \) 
    \item Initialize a graph with ten inventors:
    \begin{itemize}
        \item \( v_1 \) to \( v_5 \) connected to each other.
        \item \( v_5 \) to \( v_{10} \) connected to each other.
    \end{itemize}
    \item Include a facilitator player \texttt{tg} with \( k = 5 \).
    \item Run the simulation for 10 stages.
    \item In one simulation, let \texttt{tg} choose to connect 5 random pairs of currently unconnected vertices (e.g., \( v_1 \) with \( v_{10} \)).
    \item In another simulation, let \texttt{tg} always connect \( v_4 \) with \( v_6, v_7, v_8, v_9, v_{10} \).
    \item Compute and plot:
    \begin{itemize}
        \item The cumulative reward of \texttt{tg} after each stage.
        \item The knowledge growth of each inventor after each stage.
        \item The Shapley value \( \phi_i \) of each inventor after each stage.

    \end{itemize}
\end{itemize}

\textbf{Simulation Task Three:}

\begin{itemize}
    \item In this simulation the number of graphs is \( 2 \) 
    \begin{itemize}
        \item Initialize \( G_1 \) with ten inventors:
        \begin{itemize}
            \item \( v^1_1 \) to \( v^1_5 \) connected to each other.
            \item \( v^1_5 \) to \( v^1_{10} \) connected to each other.
        \end{itemize}
    \end{itemize}
    \begin{itemize}
        \item Initialize \( G_2 \) with ten inventors:
        \begin{itemize}
            \item \( v^2_1 \) to \( v^2_5 \) connected to each other.
            \item \( v^2_5 \) to \( v^2_{10} \) connected to each other.
        \end{itemize}
    \end{itemize}            
    \item Include a facilitator player \texttt{\(tg_1\)} with \( k = 5 \).  Let \texttt{\(tg_1\)} choose to connect 5 random pairs of currently unconnected vertices (e.g., \( v^1_1 \) with \( v^1_{10} \)).
    \item Include a facilitator, \texttt{\(tg_2\)} with \( k = 1 \).  Let \texttt{\(tg_2\)} choose to connect 1 random pairs of currently unconnected vertices (e.g., \( v^1_1 \) with \( v^1_{10} \)).
    \item Compute and plot:
    \begin{itemize}
        \item The cumulative reward of \texttt{\(tg_1\)} after each stage.
        \item The cumulative reward of \texttt{\(tg_2\)} after each stage.
        \item The knowledge growth of each inventor after each stage.

    \end{itemize}
    \item Run the simulation for 10 stages.

\end{itemize}

Each plot should be clearly identified clearly associated with the simulation task and strategy used by the $tg_i$s.  In addition the plots should be both written to a file and standard output.

\
When there is more than one graph, create a table that represents what percentage of an inventor knowledge was created in each of the organizations.

\end{humanprompt}

%=============== END prompt for enerating a simulation with an invariant 
\end{document}